\title{Traversal-invariant characterizations of logarithmic space}
\date{}
\author{Siddharth Bhaskar\footnote{DIKU, University of Copenhagen, Copenhagen, Denmark}, Steven Lindell\footnote{Department of Computer Science, Haverford College, Haverford, PA, USA}, and Scott Weinstein\footnote{Department of Philosophy, University of Pennsylvania, Philadelphia, PA, USA}}
\begin{document}

\maketitle

\newtheorem{definition}{Definition}

\newtheorem{theorem}{Theorem}
\newtheorem{lemma}{Lemma}
\newtheorem{corollary}{Corollary}

\begin{abstract}
    We give a novel descriptive-complexity theoretic characterization of L and NL computable queries over finite structures using \emph{traversal invariance}. We summarize this as (N)L = FO + (breadth-first) traversal-invariance.
\end{abstract}

\section{Presentation invariance}

A common phenomenon in mathematics is that some property or quantity is defined in terms of some additional structure, but ends up being invariant of it. Dimension of a vector space and Euler characteristic of a manifold are important examples of this phenomenon; they are defined in terms of a given basis or simplicial complex respectively, but are invariant of the particular one chosen.

This state of affairs is very common in descriptive complexity theory. For example, suppose we want to compute the parity of a given finite set $X$. If we are given some linear ordering $(X,<)$, there is an inductive program computing the parity of $X$, but the result computed is independent of the particular ordering. Therefore, we call parity \emph{order-invariant LFP}: computable by an LFP program with a given order, but independent of the specific choice.

The celebrated result of Immerman and Vardi \cite{Imm86,Var82} that LFP logic captures polynomial-time queries over families of ordered finite structures can be recast as, \emph{order-invariant} LFP logic captures polynomial-time queries over \emph{all} finite structures.
Since then, a wide array of correspondences have been identified between known complexity classes on one hand, and invariant forms of LFP, MSO, or first-order logic on the other. For example, first-order logic and LFP logic invariant in arbitrary numerical predicates captures $\mathrm{AC}^0$ and $\mathrm{P}/_\mathrm{poly}$ respectively  \cite{Sch13}.

\paragraph{Our contribution}
We give a novel characterization of logarithmic and nondeterministic logarithmic space queries using presentation-invariant first-order definability (Theorems \ref{main-1} and \ref{main-2}). The presentations in question are traversals and breadth-first traversals respectively, which are certain types of linear orders on finite graphs.

This is to our knowledge the first characterization of L or NL that does not rely on any sort of recursion or sequential computation, however limited, such as a function algebra, fixed-point logic, programming language, or automaton.

We find it fascinating and mysterious that passing from traversals to breadth-first traversals in the presentation causes a jump from L to NL in definability power. It begs the question, what other complexity classes can be characterized by certain types of graph search?

\paragraph{Structure of this paper}
In Section \ref{sec:traversals}, we discuss traversal- and breadth-first traversal-invariant definability, and show the definability of undirected and directed reachability respectively. In Section \ref{sec:descriptive complexity}, we present descriptive-theoretic characterizations of L and NL (Theorems \ref{main-1} and \ref{main-2}).

\paragraph{Preliminaries and notation}
We assume familiarity with basic graph theory, automata theory, and model theory, including the notion of \emph{interpretation}. We will denote graphs and other first-order structures by uppercase Roman letters. By ``graph'' we always mean ``undirected graph;'' we will say ''directed graph'' when we need to.
We denote families of structures in a common signature by captial calligraphic letters, e.g., $\mathcal{K}$.

\if{false}

\subsection{Definitions}

Our basic objects of study are \emph{queries over families of structures}. Let $\mathcal{L}$ be a family of structures of a common signature $L$. Then an $n$-ary query $R$ is a relation $R^A \subseteq A^n$ for each structure $A \in \mathcal{L}$ that satisfies the isomorphism-invariance condition, for all $A, B \in \mathcal{L}$ and isomorphism $f$ mapping $A$ onto $B$, $f[R^A] = R^B$; we write $A \models R(\vec{x})$ to mean $\vec{x} \in R^A$. In the important special case that $n=0$, $R$ can be interpreted as a subset of $\mathcal{L}$; this is called a \emph{boolean query}.

Invariant queries are always parameterized by pairs of families of structures $\mathcal{(L,P)}$. Whenever we form such a pair, we implicitly assume:

\begin{enumerate}
    \item The signature $L$ of $\mathcal{L}$ is contained in the signature $L^+$ of $\mathcal{P}$,
    \item $\mathcal{L}$ and $\mathcal{P}$ are nonempty,
    \item every structure in $\mathcal{L}$ has some expansion in $\mathcal{P}$, and 
    \item the $L$-reduct of every structure in $\mathcal{P}$ is in $\mathcal{L}$.
\end{enumerate}

\begin{definition}
A query $R$ over $\mathcal{P}$ is \emph{$(\mathcal{L},\mathcal{P})$-invariant} in case for any two structures $A$ and $B$ in $\mathcal{P}$ with the same domain, if $A| L = B|L $, then $A \models R(\vec{x}) \iff B \models R(\vec{x})$ for any tuple $\vec{x}$ in their common domain.
\end{definition}

We will use terms, e.g., ``first-order $\mathcal{(L,P)}$-invariant formula,'' to indicate a first-order $L^+$-formula which defines a $\mathcal{(L,P)}$-invariant query.

In previous work, e.g., \cite{Ros07} another definition of presentation invariance is given:

\begin{definition}
 \emph{(Old definition)} Let $L$ and $L'$ be two signatures, $\mathcal{C}$ a class of $L$-structures and $\mathcal{Q}$ a class of $L'$ structures. A $L\cup L'$-query $R$ is $\mathcal{(C,Q)}$-invariant in case for any two $L\cup L'$-structures $A,B$ with the same domain, if $A|L = B|L$ and $A|L',B|L' \in \mathcal{C}$, then $A\models R(\vec{x}) \iff B \models R(\vec{x})$ for any $\vec{x}$ in their common domain.
\end{definition}

In the old definition, the structures and their presentations are ``independent of each other.'' This does not accommodate presentations of graphs by traversals, a central topic of this paper. 

To see how we generalize the old definition, take $L^+$ to be $L \cup L'$, take $\mathcal{L}$ to be $\mathcal{C}$, and take $\mathcal{P}$ to be the set of all $L^+$-structures whose $L$- and $L'$-reducts are in $\mathcal{C}$ and $\mathcal{Q}$ respectively. For a formula to have a well-defined semantics, for every structure in $\mathcal{C}$, there needs to be a structure in $\mathcal{Q}$ of the same cardinality. Hence every structure in $\mathcal{L}$ has some expansion in $\mathcal{P}$.

\subsubsection{``Syntax'' and semantics}
Suppose that $R$ is a $\mathcal{(K,P)}$-invariant query over $\mathcal{P}$. Then $R$ naturally induces a query over $\mathcal{K}$ in the natural way: take any structure in $\mathcal{K}$, arbitrarily expand it to a structure in $\mathcal{P}$, and interpret $R$. By invariance, the answer is independent of the particular expansion.

When $R$ is defined by some sort of formula in the language $K^+$, it will be useful to syntactically distinguish the $\mathcal{P}$-query $R$ from the query it induces over $\mathcal{K}$.
To do this, we introduce a new type of quantifier called the \textbf{invariance quantifier}, really a family of quantifiers parameterized by the pair $\mathcal{(K,P)}$. If $\vec{f}$ is the list of nonlogical symbols in $K^+ \setminus K$, then we apply the invariance quantifier $\mathfrak{P}$ to the query $R$,  binding the symbols $\vec{f}$ to form the query $(\mathfrak{P}\vec{f})\,R$.

This has the semantics
$$ A \models (\mathfrak{P}\vec{f})\,R \iff \exists B \in \mathcal{P}\ (B|K = A) \wedge 
   (B \models R), $$
for any $A \in \mathcal{K}$. By $\mathcal{(K,P)}$-invariance of $R$, 
$$ A \models (\mathfrak{P}\vec{f})\,R \iff \forall B \in \mathcal{P}\ (B|K = A) \wedge 
   (B \models R). $$
Note that we are using the assumption that $\mathcal{P}$ is nonempty, otherwise these would not be equivalent.
   
While it is useful to have a mechanism distinguishing $R$ from $(\mathfrak{P}\vec{f})\,R$, this is only ``syntax'' in the loosest sense of the word, since it is highly non-effective to check that a given string is a well-formed formula.

It is easy to check, but useful to note, that the invariance quantifier satisfies the following properties, for any $\mathcal{(K,P)}$-invariant queries $R$ and $Q$:
$$ A \models (\mathfrak{P}\vec{f})\,(R \wedge Q) \iff A \models (\mathfrak{P}\vec{f})\, R \wedge A \models (\mathfrak{P}\vec{f})\, Q $$
$$ A \models (\mathfrak{P}\vec{f})\,(R \vee Q) \iff A \models (\mathfrak{P}\vec{f})\, R \vee A \models (\mathfrak{P}\vec{f})\, Q $$
$$ A \models (\mathfrak{P}\vec{f})\,\neg R \iff  A \not\models (\mathfrak{P}\vec{f})\, R $$
$$ A \models (\mathfrak{P}\vec{f})\,\exists x R \iff A \models \exists x\, (\mathfrak{P}\vec{f}) R$$
$$ A \models (\mathfrak{P}\vec{f})\,\forall x R \iff A \models \forall x\, (\mathfrak{P}\vec{f}) R$$

We have used $\mathfrak{P}$ to stand for a generic invariance quantifier here, but we shall reserve certain letters for specific presentations below.
   
\subsubsection{Basic properties}

\begin{lemma} Suppose $L\subseteq L^+$, $\mathcal{P}$ is a nonempty class of $L^+$-structures, and $\mathcal{L}$ its class of reducts to $L$. Then,
\begin{enumerate}
    \item Any first-order $L$-formula is $\mathcal{(K,P)}$-invariant.
    \item Conversely, if $\mathcal{P}$ is an elementary class, then any $(\mathcal{L},\mathcal{P})$-invariant first-order formula is equivalent over $\mathcal{L}$ to a first-order $L$-formula.
    \item If there is a uniform definitional expansion of structures in $\mathcal{L}$ to structures in $\mathcal{P}$, then any $(L,\mathcal{P})$-invariant first-order formula is equivalent to some first-order $L$-formula over $\mathcal{L}$. 
    \item If $\mathcal{K'} \subseteq \mathcal{L}$ and $\mathcal{P' \subseteq P}$, then any $\mathcal{(K,P)}$-invariant formula is also $\mathcal{(K',P')}$-invariant.
\end{enumerate}
\end{lemma}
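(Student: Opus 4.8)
The plan is to prove each of the four parts of this lemma separately, since they are largely independent.

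For part (1), the claim is that any first-order $L$-formula is $(\mathcal{L},\mathcal{P})$-invariant. This should be essentially immediate: if $\varphi$ is an $L$-formula and $A, B \in \mathcal{P}$ share the same domain with $A|L = B|L$, then since $\varphi$ mentions only symbols in $L$, its truth value on any tuple depends only on the $L$-reduct. So $A \models \varphi(\vec{x}) \iff A|L \models \varphi(\vec{x}) \iff B|L \models \varphi(\vec{x}) \iff B \models \varphi(\vec{x})$. I would state this with a one-line induction on formula structure if pressed, but the reduct principle makes it trivial.

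For part (2), the converse under the hypothesis that $\mathcal{P}$ is an elementary class, I would argue via a compactness/Beth-style definability argument. The idea is that $(\mathcal{L},\mathcal{P})$-invariance says precisely that the extra symbols $\vec{f}$ are \emph{implicitly} definable from the $L$-reduct relative to the theory of $\mathcal{P}$, at least as far as the query $R$ is concerned. Concretely, I would consider two copies of the signature $L^+$ sharing the common part $L$, assert that both expansions lie in $\mathcal{P}$ (using that $\mathcal{P}$ is elementary, so this is a first-order theory), and that they agree on $L$; invariance forces them to agree on $R(\vec{x})$. By Craig interpolation (or the Beth definability theorem applied to the query $R$), one extracts an $L$-formula equivalent to $R$ over $\mathcal{L}$. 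This is where I expect the real subtlety: getting the two-sorted/duplicated-signature setup right so that Beth or interpolation applies cleanly, and handling the free variables $\vec{x}$ and the fact that equivalence is only required over $\mathcal{L}$ rather than over all structures.

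For part (3), I would use the uniform definitional expansion directly: if each symbol $f$ in $L^+ \setminus L$ is given by an explicit $L$-formula $\delta_f$ on structures in $\mathcal{L}$ (uniformly), then I substitute these definitions into the $(\mathcal{L},\mathcal{P})$-invariant formula. Invariance guarantees the answer does not depend on which expansion is chosen, and the definitional expansion gives one canonical choice, so replacing each occurrence of $\vec{f}$ by its defining $L$-formula yields an equivalent $L$-formula over $\mathcal{L}$. The main thing to check is that this syntactic substitution is legitimate for relation and function symbols alike, which is routine once the uniform definitions are in hand.

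For part (4), the monotonicity claim, I would argue directly from the definition of invariance. If $R$ is $(\mathcal{K},\mathcal{P})$-invariant and we pass to smaller classes $\mathcal{K}' \subseteq \mathcal{L}$ and $\mathcal{P}' \subseteq \mathcal{P}$, then the invariance condition, being a universally quantified statement over pairs of structures in $\mathcal{P}$ agreeing on their $L$-reduct, holds a fortiori when we restrict the quantifier to range over the smaller class $\mathcal{P}'$. Thus no new counterexamples can appear, and invariance is preserved. I expect part (2) to be the only genuinely nontrivial component; the remaining three are short verifications.
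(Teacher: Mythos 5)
Your proposal matches the paper's treatment: the paper dismisses parts (1), (3), and (4) as easy and elementary verifications, and handles part (2) by appeal to Beth definability, which is exactly the Craig-interpolation/duplicated-signature argument you sketch. Your write-up is correct and, if anything, more explicit than the paper's own one-line justification.
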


The proofs of each of these are both easy and elementary, except for 2., which is immediate by Beth definability. By ``uniform definitional expansion'' in 3 we mean a $L$-formula for each atomic formula of $L^+$ such that if we take any structure in $\mathcal{L}$ and expand it by the relations defined by these formulas we get a structure in $\mathcal{P}$.

\subsection{Interpretations and change of signature}
A $(L,\mathcal{P})$-invariant formula defines a global relation over the family of $L$-structures. We can make sense of certain presentations, like linear orders, no matter what the base class $\mathcal{L}$ is; in that case, $\mathcal{P}$ would be the family of expansions of structures in $\mathcal{L}$ by all possible linear orders. However, the ``flagship presentation'' of this paper will be that of a \emph{traversal}, which only makes sense over graphs.

But we will want a notion of traversal-invariant definability over structures in an arbitrary signature. We do this by closing under elementary interpretations. (Indeed, even if we were only considering definability over graphs, we would want to close under elementary interpretations from graphs to graphs to recover our main descriptive complexity-theoretic results.)

We review the basic definitions behind interpretations in this section, roughly following the style of Hodges \cite{Hod93}. There is no new mathematical content here; however, getting the definitions and terminology straight is terribly important, since we will be using interpretations extensively. 

\begin{definition}
 Let $L$ and $K$ be signatures and $k \in \mathbb{N}$. An \emph{elementary $k$-ary interpretation} $\pi : L \to K$ is a first-order $K$-formula $\partial^\pi(\bar{x})$, for each constant symbol $c \in L$ a variable-free $K$-term $c^\pi$, and for each relation symbol $r \in L$, a first-order $K$-formula $r^\pi(\bar{x}_1,\dots,\bar{x}_n)$, where $n$ is the arity of $r$, and the length of each tuple throughout is $k$.
 
 An interpretation is \emph{quantifier-free} in case $\partial^\pi$ and each $r^\pi$ is quantifier-free.
\end{definition}

\begin{definition}
 Given an elementary $k$-ary interpretation $\pi : L \to K$ and a first-order $L$-term or $L$-formula $\alpha$, its \emph{translation} $\alpha^\pi$, is a $K$-formula given by the following recursion:
 \begin{enumerate}
    \item If $\alpha$ is a variable $x$, then $\alpha^\pi$ is a $k$-tuple of (distinct) variables $\bar{x}$.
    \item If $\alpha$ is a constant symbol $c$, then $\alpha^\pi$ is $c^\pi$. 
     \item If $\alpha$ is the atomic formula $r(\alpha_1,\dots,\alpha_n)$, then $\alpha^\pi$ is $r^\pi(\alpha^\pi,\dots,\alpha^\pi)$,
     \item If $\alpha$ is a boolean combination of formulas $\vartheta$, then $\alpha^\pi$ is the same boolean combination of formulas $\vartheta^\pi$,
     \item If $\alpha$ is $\exists x\, \vartheta$, then $\alpha^\pi$ is the formula $\exists \bar{x}\, \partial^\pi(\bar{x}) \wedge \vartheta^\pi$, and
     \item If $\alpha$ is $\forall x\, \vartheta$, then $\alpha^\pi$ is the formula $\forall \bar{x}\, \partial^\pi(\bar{x}) \to \vartheta^\pi$.
 \end{enumerate}
 
\end{definition}

\begin{definition}
  Suppose $\pi : L \to K$ is an interpretation and $A$ is a $K$-structure. Then the \emph{$\pi$-translation} $A^\pi$ is the $L$-structure with domain $\partial^\pi[A^k]$ with the denotation of $\lambda$ given by $\lambda^\pi[A]$, for each $\lambda \in L$. 
  
  (Note that even though the arity of $\lambda^\pi$ is $nk$ as a $K$-formula, it defines an $n$-ary relation over $A^\pi$, whose elements are $k$-tuples of $A$.)
\end{definition}

\paragraph{Functional signatures} In a very particular case (see \emph{successor expansions}, Definition \ref{successor expansions}) we will want to consider signatures with function symbols, and exactly once, we will want to define an interpretation $\pi : L \to K$ where $L$ has some function symbol $f(x_1,\dots,x_n)$. In this case $f^\pi(\vec{x}_1,\dots,\vec{x}_n)$ is a \emph{definition by cases}, where each case is a first-order $K$-formula, and the \emph{definiens} inside each case is a term. In a quantifier-free interpretation, each case must be a quantifier-free $K$-formula.

\paragraph{Injective interpretations} Usually an interpretation will also contain a first-order $K$-formula $eq^\pi(\bar{x},\bar{y})$ defining when we regard two $k$-tuples as equal. (For example, when interpreting rational numbers by pairs of integers, we say $(a,b) = (c,d) \iff ac-bd =0$.) In case $eq^\pi$ is simply equality of tuples (as above), $\pi$ is called an \emph{injective interpretation}. Here we do not deal with any interpretations with a nontrivial equivalence relation, and moreover the assumption of injectivity will be important in Section \ref{sec:Wellorder-invariance}. Therefore, it is convenient to drop the word ``injective'' and simply refer to interpretations.

\begin{lemma}[Fundamental property of interpretations]
Suppose that $\pi : L \to K$ is an elementary interpretation. Then for every $K$-structure $A$, every $n$-ary $L$-sentence $\varphi$, and every $\bar{x}_1,\dots,\bar{x}_n$ in the domain $\partial^\pi(A^k)$ of $A^\pi$,
$$ A \models \varphi^\pi (\bar{x}_1,\dots,\bar{x}_n) \iff A^\pi \models \varphi (\bar{x}_1,\dots,\bar{x}_n).$$
\end{lemma}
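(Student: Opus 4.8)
The plan is to prove the Fundamental Property by induction on the structure of the $L$-formula $\varphi$, showing simultaneously (as is standard for such results) the analogous statement for $L$-terms: namely, that for any $L$-term $t(x_1,\dots,x_n)$ and any tuples $\bar{a}_1,\dots,\bar{a}_n \in \partial^\pi[A^k]$, the value $t^{A^\pi}(\bar{a}_1,\dots,\bar{a}_n)$ (a $k$-tuple of elements of $A$) is computed by the translated term, i.e. equals the tuple obtained by evaluating $t^\pi$ in $A$ at those arguments. I would first state this term-level claim as the base of the induction, since the quantifier cases of the formula induction rely on it only through the atomic case, but the atomic case itself requires knowing that terms translate correctly.

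First I would handle terms. If $t$ is a variable $x_i$, then by clause (1) of the translation its image is the tuple of variables $\bar{x}_i$, and the claim is immediate because the assignment sends $\bar{x}_i$ to $\bar{a}_i$. If $t$ is a constant $c$, then $t^\pi = c^\pi$ and by the definition of $A^\pi$ the denotation of $c$ in $A^\pi$ is exactly $(c^\pi)^A$, so the claim holds. The inductive step for a compound term $f(t_1,\dots,t_m)$ uses the definition of $A^\pi$ on the function symbol $f$ together with the induction hypothesis on each $t_j$; this is the one place the functional-signature case (definition by cases) enters, and I would check that the case-splitting formula selects the correct definiens in $A$, matching the value $f^{A^\pi}$ assigns.

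With terms in hand, the formula induction is routine. For an atomic formula $r(t_1,\dots,t_m)$, I would use the term claim to reduce to the definition of the denotation $r^{A^\pi} = r^\pi[A]$, so that $A^\pi \models r(t_1,\dots,t_m)(\bar{a})$ iff $A \models r^\pi(\dots)$ as required; equality atoms $t_1 = t_2$ are handled by the injectivity assumption (that $eq^\pi$ is literal tuple-equality), so that equality in $A^\pi$ is just componentwise equality in $A$. Boolean connectives follow immediately from clause (4) since the translation commutes with them and the semantics of $\wedge,\vee,\neg$ is defined pointwise on both sides. The existential case is the crux: $A^\pi \models \exists x\, \vartheta(\bar{a})$ means there is some $\bar{b} \in \partial^\pi[A^k]$ (an element of the domain of $A^\pi$) with $A^\pi \models \vartheta(\bar{a},\bar{b})$; by the induction hypothesis this holds iff $A \models \vartheta^\pi(\bar{a},\bar{b})$, and the relativization clause $\exists \bar{x}\,(\partial^\pi(\bar{x}) \wedge \vartheta^\pi)$ is satisfied in $A$ precisely when such a $\bar{b}$ exists \emph{inside} the relativizing predicate $\partial^\pi$, which is exactly the condition $\bar{b} \in \partial^\pi[A^k]$. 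The universal case is dual, using the implication form $\forall \bar{x}\,(\partial^\pi(\bar{x}) \to \vartheta^\pi)$.

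The main obstacle, and the only genuinely delicate point, is keeping the bookkeeping between the two quantifier ranges correct: quantifiers in $A^\pi$ range over the $L$-structure's domain $\partial^\pi[A^k]$, whereas after translation the quantifiers live in $A$ and range over all of $A^k$, cut down by the relativizing guard $\partial^\pi$. I expect the proof to go through cleanly provided one is scrupulous that the domain of $A^\pi$ is \emph{exactly} the set of tuples satisfying $\partial^\pi$ in $A$, so that the guard faithfully reproduces the restricted quantification at every nesting level; this is precisely what the relativization clauses (5) and (6) are designed to guarantee, and the induction makes it rigorous.
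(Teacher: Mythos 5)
Your proof is correct and is the standard argument: the paper itself states this lemma without proof, treating it as a classical fact from Hodges's treatment of interpretations, and your induction (terms first, then formulas, with atomic formulas handled via the term claim, equality via injectivity, and the quantifier cases via the relativization guards $\partial^\pi$) is exactly the argument that standard treatment supplies. The only point worth making explicit in a full write-up is well-definedness in the functional case --- that the definition by cases for $f^\pi$ is exhaustive and consistent and that its values land back in $\partial^\pi[A^k]$, so that $A^\pi$ is genuinely an $L$-structure --- but this is a presupposition built into the definition of an interpretation rather than a gap in your induction.
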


Note that on the left-hand side, $(\bar{x}_1,\dots,\bar{x}_n)$ is regarded as an $nk$-tuple of elements in $A$, and on the right-hand side, it is regarded as an $n$-tuple of elements in $A^\pi$.

There is a useful categorical perspective on interpretations that we do not go into here. Briefly, interpretations are morphisms in the category of signatures, and induce morphisms in the category of structures under the ``semantics functor'' which takes a signature $L$ to the family of $L$-structures. From this viewpoint, interpretations behave like a sort of categorical adjoint.

\begin{definition}
 Suppose that $\pi : L \to K$, $\mathcal{L}$ is a class of $L$-structures, and $\mathcal{K}$ is a class of $K$-structures. Then $\pi$ is a \emph{left total interpretation $\mathcal{K \to L}$} in case for every $A \in \mathcal{K}$, $A^\pi \in \mathcal{L}$.
\end{definition}

\begin{definition}
Let $\mathcal{K}$ be a family of $K$-structures, $\mathcal{L}$ a family of $L$-structures, and $Q$ be an $nk$-ary query over $\mathcal{K}$. Then we say $Q$ is \emph{basic $\mathcal{(L,P)}$-invariant definable} in case there exists a $k$-ary left total interpretation $\pi : \mathcal{K \to L}$ and an $n$-ary $\mathcal{(L,P)}$-invariant formula $\varphi$ such that for every $A \in \mathcal{K}$ and $\bar{x}_1,\dots,\bar{x}_n \in A^\pi$, 
$$ A \models Q(\bar{x}_1,\dots,\bar{x}_n) \iff A^\pi \models (\mathfrak{P}\vec{f})\,\varphi(\bar{x}_1,\dots,\bar{x}_n).$$
\end{definition}

\begin{definition}
With the same notation as above, we say that $Q$ is \emph{$\mathcal{(L,P)}$-invariant definable} in case it is expressible as a first-order combination of basic $\mathcal{(L,P)}$-invariant definable queries.

We say $Q$ is \emph{$\mathcal{(L,P)}$-invariant axiomatizable} in case it is expressible as a countable intersection of $\mathcal{(L,P)}$-invariant definable queries.
\end{definition}

\paragraph{A note on terminology}
``Definable'' and ``axiomatizable'' are analogous to ``basic elementary'' and ``elementary'' classes of structures according to the usual conventions of first-order model theory. On the other hand, finite model theory calls definable classes ``elementary'' and does not typically deal with axiomatizable classes. To avoid confusion, we use ``definable'' and ``axiomatizable,'' following Hodges \cite{Hod93}.

\fi

\section{Traversals}\label{sec:traversals}
Traversals are absolutely fundamental in computer science. They give us systematic ways of exploring a finite graph or other sort of network, and lie at the foundation of all sorts of sophisticated algorithms and techniques. 
Let us isolate the simplest possible version, which we call \emph{generic graph search}, and which operates over a finite nonempty graph $G$.
\begin{enumerate}
    \item Initialize a set $S$ to some vertex in $G$, and repeat the following until $G \setminus S$ is empty.
    \item If there is some vertex in the boundary of $S$, add it to $S$. Otherwise, add any element of $G \setminus S$ to $S$.
\end{enumerate}
Generic graph search is nondeterministic, insofar as it does not specify which vertex to add to $S$. Important refinements of this algorithm include \emph{breadth-first} and \emph{depth-first} search, which specify additional heuristics for how to add vertices to $S$, without being fully deterministic.

In common parlance, the word \emph{traversal} can refer either to the algorithm or the linear orders of $G$ they produce, but in the current work we reserve the term ``traversal,'' ``breadth-first traversal,'' and ``depth-first traversal'' for the latter.
In this paper, we do not work with depth-first traversals, but we will come back to them in the last section.

\begin{definition}
For a finite graph $G$, $(G,<)$ is a \emph{traversal} (resp. breadth-first traversal, depth-first traversal) in case some instance of generic graph search (resp. breadth-first search, depth-first search) of $G$ visits its vertices in order $<$. 
\end{definition}

Corneil and Krueger \cite{CoKr08} discovered that, in fact, these traversals are first-order definable in the language of ordered graphs.
\begin{lemma}
For any finite graph $G$,
\begin{align*}
 (G,<)\text{ is a traversal} &\iff (G,<) \models (\forall u < v < w)(uEw \to (\exists x<v) xEv), \\
 (G,<)\text{ is a breadth-first traversal} &\iff (G,<) \models (\forall u < v < w)(uEw \to (\exists x < v) x \le u \wedge xEv), \\
 (G,<)\text{ is a depth-first traversal} &\iff (G,<) \models (\forall u < v < w)(uEw \to (\exists x < v) x \ge u \wedge xEv).
\end{align*}
\end{lemma}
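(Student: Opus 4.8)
The plan is to prove each of the three biconditionals by characterizing the orders produced by the corresponding search algorithm, handling the three cases in a uniform way with the forward direction (``traversal $\Rightarrow$ formula'') being the easy invariant-preservation argument and the backward direction (``formula $\Rightarrow$ traversal'') being the place where the real content lies. Throughout I would fix the enumeration $v_1 < v_2 < \cdots < v_n$ that $<$ induces, and identify a ``traversal'' with the order in which a run of generic graph search pops vertices into $S$. The key structural fact driving everything is that generic graph search, once $S$ is nonempty, prefers boundary vertices: a new vertex $v$ either has a neighbor already in $S$ (it was on the boundary) or else $S$ has no boundary at all (the whole current component is exhausted) and $v$ starts a fresh component. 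This dichotomy is exactly what the right-hand formulas are trying to capture, once one observes that the hypothesis $u < v < w$ with $uEw$ witnesses that $v$ was added while the boundary of $\{x : x < v\}$ was nonempty (because $u$ already sits in $S$ and has the still-unvisited neighbor $w$, so the boundary was not empty when $v$ was chosen).

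For the forward direction I would argue by induction on the step at which $v$ is added. Given $u < v < w$ and $uEw$, the edge $uw$ shows that at the moment $v$ is chosen the set $S = \{x : x < v\}$ has a nonempty boundary (namely $w$, reachable from $u \in S$). Hence by the boundary-preference rule, $v$ itself must have been a boundary vertex, i.e.\ $v$ has some neighbor $x \in S$, which gives $\exists x < v\,(xEv)$ — the generic traversal formula. For breadth-first search I would additionally use the FIFO discipline: the neighbor $x$ responsible for enqueuing $v$ must have been dequeued no later than $u$, forcing $x \le u$; for depth-first search the LIFO discipline forces the active neighbor to be at least $u$, giving $x \ge u$. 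In each case the extra conjunct in the formula is precisely the invariant maintained by the queue or stack.

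The harder, and hence main, direction is the converse: given an ordered graph satisfying the formula, I must construct an actual run of the search algorithm visiting vertices exactly in the order $<$. Here I would proceed greedily, showing that the order $<$ is a \emph{legal} schedule for the algorithm: at each step $k$, with $S = \{v_1,\dots,v_{k-1}\}$, I must check that adding $v_k$ is a permitted move. If $S$ has no boundary, any vertex is permitted, so $v_k$ is fine. If $S$ has a boundary, I must show $v_k$ is a boundary vertex, i.e.\ $v_k$ has a neighbor below it. This is where the formula is used contrapositively: if $v_k$ had no neighbor below it yet the boundary of $S$ were nonempty, there would be some $u < v_k$ with a neighbor $w$, and since $v_k$ itself is unvisited and boundary-free the edge structure $u < v_k < w$ (after checking $v_k < w$, using that $w$ is unvisited hence $w \ge v_k$, with equality excluded since $v_k$ has no neighbor below and $w \neq v_k$) would violate the universal formula. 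The main obstacle I anticipate is precisely this bookkeeping about which unvisited vertex $w$ to pick and verifying the index inequalities $u < v < w$ line up — together with, for the breadth-first and depth-first cases, reconstructing a consistent queue/stack state that both reproduces the order $<$ and respects the FIFO/LIFO discipline. For BFS I would maintain the invariant that the neighbors-below of successive vertices are nondecreasing (monotonicity of the ``parent'' function), which the conjunct $x \le u$ guarantees is consistent with a single queue; establishing that this local condition globally assembles into a genuine BFS run is the delicate step.
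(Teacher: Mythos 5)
First, a point of orientation: the paper itself contains no proof of this lemma --- it is quoted as a result of Corneil and Krueger \cite{CoKr08} --- so your reconstruction can only be judged on its own terms, not against an in-paper argument. On those terms, your skeleton is the standard one and much of it is right: the forward directions via boundary-preference (generic), FIFO (breadth-first), and LIFO (depth-first) disciplines are correct, and your converse for generic search is complete --- the contrapositive step showing that any boundary witness $w$ of $S=\{x : x<v_k\}$ must satisfy $w>v_k$, whence the formula hands $v_k$ a prior neighbor, is exactly the right argument. The genuine gap is the converse for BFS and DFS, which is where the real content of the lemma lives (as you yourself say). For BFS you name the correct invariant (monotonicity of the least-neighbor ``parent'' function $p$) but then explicitly defer ``establishing that this local condition globally assembles into a genuine BFS run''; that assembly \emph{is} the claim, so as written the BFS case is asserted rather than proved. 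For DFS you offer only ``reconstructing a consistent stack state,'' and note that parent monotonicity is \emph{not} the right invariant there, so the BFS plan does not transfer.

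Here is a concrete way to close both cases that avoids queue/stack reconstruction entirely and makes all three biconditionals uniform. Take as the definition of a legal BFS (resp.\ DFS) move: when the boundary of $S$ is nonempty, the algorithm may visit exactly the unvisited neighbors of the \emph{least} (resp.\ \emph{greatest}) visited vertex that still has an unvisited neighbor; this is the Corneil--Krueger formulation, and if you insist on starting from the queue/stack formulation you need the standard (easy) equivalence as a preliminary lemma. Now run your greedy legality check. Suppose the boundary of $S=\{v_1,\dots,v_{k-1}\}$ is nonempty, and let $u^*$ be the least (for BFS) or greatest (for DFS) vertex of $S$ having a neighbor $w\notin S$. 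If $w=v_k$ the move is legal outright. Otherwise $u^*<v_k<w$ and $u^*Ew$, so the BFS formula yields $x\le u^*$ with $x<v_k$ and $xEv_k$; such an $x$ lies in $S$ and is itself a boundary vertex (its neighbor $v_k$ is outside $S$), so minimality of $u^*$ forces $x=u^*$, i.e.\ $v_k$ is adjacent to $u^*$ and the move is legal. Dually, the DFS formula yields $u^*\le x<v_k$ with $xEv_k$, and maximality of $u^*$ forces $x=u^*$. This two-line pinning argument finishes the converse in all three cases at once, and is the piece your proposal is missing.
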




Note that connected components of $G$ induce intervals in a traversal. Notice also how the definitions of breadth-first traversal and depth-first traversal refine the notion of traversal in opposing ways: given a vertex $v$ that occurs between two endpoints $u$ and $w$ of a single edge, $v$ must have some prior neighbor in a plain traversal. In a breadth-first traversal, there must be some prior neighbor less than or equal to $u$, and in a depth-first traversal, there must be some prior neighbor greater than or equal to $u$.

It is an easy but very important fact that

\begin{lemma}
Every finite graph admits a breadth-first traversal and a depth-first traversal; a fortiori, every finite graph admits a traversal.
\end{lemma}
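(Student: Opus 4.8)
The plan is to prove existence constructively, since the definition ties each kind of traversal directly to the existence of a successful run of the corresponding search procedure. First I would dispatch the ``a fortiori'' clause. Breadth-first and depth-first search are, by definition, refinements of generic graph search: they only impose additional heuristics on \emph{which} boundary vertex to add in step~2, so any run of either is in particular a run of generic graph search. Hence every breadth-first traversal and every depth-first traversal is automatically a traversal, and it suffices to produce a breadth-first traversal and a depth-first traversal of an arbitrary finite graph $G$. (The same implication can be read off the preceding lemma: the defining formulas for the breadth-first and depth-first cases are the plain-traversal formula with the extra conjunct $x \le u$, resp.\ $x \ge u$, added inside the existential, so each logically entails $(\forall u<v<w)(uEw \to (\exists x<v)\,xEv)$.)

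For the main claim I would argue that breadth-first search (resp.\ depth-first search) always runs to completion on a finite graph, and that the visitation order of any such run is by definition the desired traversal. Concretely, fix any instance of the search, maintaining the set $S$ of already-visited vertices together with the order in which they were added, and beginning from an arbitrary start vertex. The key observation is that the procedure never halts while unvisited vertices remain: whenever $G \setminus S$ is nonempty, step~2 supplies a vertex to add, namely a boundary vertex if one exists (the refinement merely selects among the available boundary vertices) or, via the ``otherwise'' branch, an arbitrary element of $G \setminus S$. Since each execution of step~2 strictly increases $|S|$ by one, after exactly $|G|$ steps we have $S = V(G)$, and the order of visitation is a linear order $<$ on $V(G)$. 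By the definition of traversal, $(G,<)$ is then a breadth-first (resp.\ depth-first) traversal.

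I do not expect a genuine obstacle, which matches the lemma being flagged as easy; the only point meriting care is the claim that the search never gets stuck, and this is immediate from the ``otherwise'' branch that both refinements inherit, and which in particular lets the search cross between connected components of a disconnected graph. If one instead preferred to route the proof through the Corneil--Krueger characterization, I would build the order by listing vertices in layers of nondecreasing distance from a sequence of roots (one per component) and verify the layered invariant forcing the defining formula; but this is strictly more work and yields nothing extra, so I would present the algorithmic version above.
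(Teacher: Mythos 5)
Your proof is correct, and it matches the intended argument: the paper states this lemma without proof, flagging it as easy, and the justification it has in mind is exactly yours. Namely, any instance of breadth-first or depth-first search adds one vertex per execution of step~2 and never stalls (the ``otherwise'' branch is inherited by both refinements, so a vertex is always available while $G \setminus S$ is nonempty), hence it terminates having visited every vertex, and the visitation order is by definition the desired traversal; the ``a fortiori'' clause follows since every run of a refinement is a run of generic graph search (or, as you note, since the Corneil--Krueger formulas for the refined traversals logically entail the plain one).
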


In the present paper we characterize L and NL using traversals and breadth-first traversals respectively; it is an open question whether depth-first traversals similarly characterize some complexity class.

\subsection{Traversal-invariant definability}
We now present the fundamental definability-theoretic concepts in this paper.  We use the standard model-theoretic notion of an \emph{interpretation} in this definition; for details see the Appendix. If $\mathcal{K}$ is some family of structures in a common signature, by a ``query over $\mathcal{K}$,'' we mean a boolean query, i.e., an isomorphism-closed subset of $\mathcal{K}$.\footnote{We will represent $n$-ary queries over $\mathcal{K}$ by boolean queries over the family of structures obtained by expanding every structure in $\mathcal{K}$ by any $n$ points.}

\begin{definition}
Suppose $K \subseteq K^+$ are signatures, $\mathcal{K}$ is a nonempty family of $K$-structures, and $\mathcal{P}$ is a nonempty family of $K^+$-structures, such that for any $A \in \mathcal{P}$, its $K$-reduct $A|_K$ is in $\mathcal{K}$. 

A first-order sentence $\varphi$ over $\mathcal{P}$ is \emph{$(\mathcal{K},\mathcal{P})$-invariant} in case for any two structures $A$ and $B$ in $\mathcal{P}$, if $A|_K \cong B|_K $, then $A \models \varphi \iff B \models \varphi$.
\end{definition}

\begin{definition}
  An \emph{$n$-pointed graph} is a graph expanded with $n$ constants. Let $\Gamma_n$ be the language of $n$-pointed graphs, i.e., a binary relation symbol and $n$ constant symbols.
\end{definition}

\begin{definition}
 Let $\mathcal{G'}$ be a family of finite $n$-pointed graphs, $\mathcal{T}$ be the set of all expansions of structures in $\mathcal{G'}$ by any traversal, and $\varphi$ be a $\mathcal{(G',T)}$-invariant sentence. Then for any $G \in \mathcal{G}'$, we write
 $$G \models (\mathfrak{T} <)\,\varphi$$
 to indicate that for some (equivalently, any) traversal $<$ of $G$, $(G,<) \models \varphi$. Similarly, we write
 $$G \models (\mathfrak{B} <)\,\varphi$$
 if $\varphi$ is $\mathcal{(G',B)}$-invariant where $\mathcal{B}$ is the set of all expansions by breadth-first traversals.
\end{definition}

\begin{definition}
Let $K$ be a signature, $\mathcal{K}$ be some family of $K$-structures and $Q$ a query over $\mathcal{K}$. We say that $Q$ is \emph{basic traversal-invariant definable} if there exists some $n\in\mathbb{N}$, a family of finite $n$-pointed graphs $\mathcal{G}'$, a $\mathcal{(G',T)}$-invariant sentence $\varphi$, and an interpretation $\pi : \Gamma_n \to K$, such that 
\begin{enumerate}
\item $\pi$ is left-total from $\mathcal{K}$ to $\mathcal{G'}$, and
\item for any $A \in \mathcal{K}$,
$$ A \in Q \iff A^\pi \models (\mathfrak{T} <)\,\varphi,$$
where $\mathcal{T}$ is the set of all expansions by a traversal of all graphs in $\mathcal{G}'$.\footnote{See the Appendix for the definition of notions and notations involving interpretations.}

\end{enumerate}
 We define \emph{basic breadth-first traversal (BFT)-invariant definable} similarly. We also write $A \models ((\mathfrak{T}<)\,\varphi)^\pi$ to mean $A^\pi \models (\mathfrak{T}<)\,\varphi$.
\end{definition}

Note that in our definition of traversal- or BFT-invariant definability, $\mathcal{G'}$ is not required to be the family of \emph{all} finite $n$-pointed graphs, though it typically will be. Note also that $\mathcal{K}$ must be a family of finite structures if there is to be a left total interpretation $\pi : \mathcal{K \to G'}$.

\begin{definition}
 Let $K$ be a signature, $\mathcal{K}$ be some family of $K$-structures and $Q$ a query over $\mathcal{K}$. Then $Q$ is \emph{traversal-invariant definable} (resp. \emph{BFT-invariant definable}) if it is a boolean combination of basic traversal-invariant (resp. basic BFT-invariant) definable queries.
\end{definition}

We collect some important examples:

\begin{lemma}
The following queries are traversal-invariant definable over the indicated families of structures $\mathcal{K}$:
\begin{enumerate}
    \item Undirected $st$-connectivity, over all finite 2-pointed graphs.
    \item The family of all acyclic graphs, over all finite graphs.
    \item The family of all bipartite graphs, over all finite graphs.
    \item The family of even-sized finite linear orders, over all finite linear orders.
\end{enumerate}
\end{lemma}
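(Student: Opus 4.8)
The plan is to handle each of the four queries by exhibiting, in each case, a first-order sentence over ordered graphs that is provably traversal-invariant (by appeal to the Corneil--Krueger characterization of traversals) and that computes the query on any traversal. The key conceptual tool throughout is the observation already noted in the text: \emph{connected components induce intervals in a traversal}, and moreover the very first vertex of each component-interval is exactly the one that was added to $S$ while the boundary was empty, i.e., it has no earlier neighbor. So the structure of a traversal encodes the component decomposition of $G$ in a first-order-accessible way.

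For (1), undirected $st$-connectivity over 2-pointed graphs with constants $s,t$, I would express that $s$ and $t$ lie in the same component-interval. Concretely, $s$ and $t$ are connected iff there is no vertex strictly between $\min(s,t)$ and $\max(s,t)$ that begins a new component, where ``begins a new component'' means ``has no earlier neighbor.'' Thus the sentence is roughly
\[
\neg(\exists v)\bigl(\min(s,t) < v \le \max(s,t) \wedge \neg(\exists x < v)\,xEv\bigr),
\]
together with the degenerate handling of $s = t$; invariance holds because the component-interval containing any fixed vertex is determined by $G$ alone. For (2), acyclicity, I would use the standard fact that a forest on $m$ vertices with $c$ components has exactly $m - c$ edges, and that in any traversal each non-initial vertex of a component contributes at least one back-edge (an edge to an earlier vertex); a graph is acyclic iff \emph{every} vertex $v$ that is not the start of its component has \emph{exactly one} earlier neighbor. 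This is first-order: $(\forall v)\bigl((\exists x<v)\,xEv\bigr) \to (\exists! x<v)\,xEv$. For (3), bipartiteness, the cleanest route is a two-coloring argument: assign to each vertex a parity determined by its traversal-distance from the start of its component, but since distance is not first-order I would instead argue that $G$ is bipartite iff it has no odd cycle, and show this can be detected by a traversal-invariant sentence — here I expect to need the reachability machinery of Section~\ref{sec:traversals} (the definability of undirected reachability) rather than a bare first-order formula over a single traversal.

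For (4), the even-linear-order query, the point is that a linear order, viewed as a graph, is first just a base structure; one interprets from it an auxiliary pointed graph whose traversal structure counts parity. The natural construction is to take the successor graph (the Hasse diagram of the order) as a path, run a traversal along it, and read off whether the final vertex sits at an even or odd position; parity of a path's length is exactly what a traversal exposes through the alternating back-neighbor pattern. I would phrase evenness as the existence of a first-order-definable pairing of consecutive elements that exhausts the order, leveraging that on a path the traversal order coincides (up to reversal per component) with the path order.

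The main obstacle I anticipate is item (3), bipartiteness. The first three ``interval/component'' tricks (1)--(2) and the parity trick (4) are genuinely first-order over a single traversal, but detecting the \emph{absence of an odd cycle} is not obviously captured by a local first-order condition on one arbitrary traversal the way acyclicity is, because bipartiteness is a global parity-of-cycles property rather than an edge-count property. The honest route is to reduce bipartiteness to an $st$-reachability question in an auxiliary ``doubled'' graph (the standard tensor/bipartite-double construction, where one asks whether a vertex is reachable from its own copy), and then invoke part (1) together with closure under interpretations; so I would expect the proof of (3) to be the one that actually exercises the full framework — an interpretation $\pi$ into a 2-pointed graph followed by the connectivity sentence — rather than a standalone formula.
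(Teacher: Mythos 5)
Your (1) and (2) are correct and essentially identical to the paper's proof (the paper phrases (2) as ``relative to any traversal, no vertex has two or more prior neighbors,'' which is equivalent to your $\exists!$ formulation). The genuine gaps are in (3) and (4). For (3), your high-level strategy --- reduce bipartiteness to a connectivity question about an interpreted auxiliary graph --- is exactly right, but the bipartite double cover cannot be realized in this framework. First, you cannot literally ``invoke part (1)'': an interpretation $\pi : \Gamma_2 \to K$ with $K$ the purely relational signature of graphs must interpret the constants $s,t$ by variable-free $K$-terms, and a relational signature has none. Second, and more fundamentally, the double cover is not the translation of a plain graph under any injective, parameter-free interpretation: on an $n$-vertex edgeless graph, the domain of a $k$-ary translation is a definable, hence automorphism-invariant, subset of the $k$-tuples, so its cardinality is a sum of falling factorials $n(n-1)\cdots(n-p+1)$ in which the linear term $n$ (the diagonal) occurs at most once; no such sum equals $2n$ for all $n$. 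Producing ``two copies'' requires constants to index the copies --- which is precisely why the paper's doubling interpretation $\tau$ operates on $3$-pointed graphs, not plain ones. The paper's proof of (3) avoids doubling altogether: it interprets the \emph{square} $G^2$ (same vertex set, with edges between vertices joined by a path of length exactly two) and uses the constant-free invariant sentence ``the graph is disconnected,'' since a connected $G$ is bipartite iff $G^2$ is disconnected.

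For (4), your construction provably cannot work. The natural order of a path, starting at an endpoint, is itself a traversal of the successor graph, so a traversal-invariant sentence takes the same truth value there as on every other traversal; such a sentence would therefore be an ordinary first-order sentence over naturally ordered paths deciding the parity of their length, which is impossible by the standard Ehrenfeucht--Fra\"iss\'e argument on linear orders. (Your supporting claim that every traversal of a path agrees with the path order up to reversal is also false: a traversal may start in the middle and interleave the two directions, e.g.\ visiting a five-vertex path in the order $3,2,4,5,1$.) Likewise, ``existence of a pairing of consecutive elements that exhausts the order'' is not a first-order condition. The idea you are missing is the paper's choice of auxiliary graph: join $u$ and $v$ iff $u = v \pm 2 \bmod n$, where $n$ is the size of the order. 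This interpreted graph is a single cycle when $n$ is odd and a disjoint union of two cycles when $n$ is even, so parity again reduces to disconnectivity, which is traversal-invariantly expressible by saying some non-least vertex has no prior neighbor.
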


\begin{proof}
Let $\mathcal{G_2}$ be the family of all finite 2-pointed graphs with constants $s$ and $t$. The binary reachability relation is actually definable by a single $\mathcal{(G_2,T)}$-invariant sentence, which says that there is no $w$ with no prior neighbor such that $s < w \le t$ or $t < w \le s$. Since components of $G$ induce intervals in $(G,<)$, this formula asserts there is no interval separating $s$ and $t$ into separate connected components.

Acyclicity is similarly the spectrum of a $\mathcal{(G,T)}$-invariant sentences. A graph is acyclic iff, relative to any traversal, no vertex has two or more prior neighbors.

The \emph{square} of a graph $G = (V,E)$ is the graph $G^2 = (V,E^2)$, where $E^2(x,y)$ iff $x$ and $y$ are connected by a path of length exactly two. Then $G$ is bipartite iff $G^2$ is disconnected. Since $G^2$ is definable as a translation of $G$ under an interpretation $\pi : \mathcal{G \to G}$, and since connectivity is traversal-invariant definable, so is bipartiteness.

The parity of a linear order is also equivalent to the connectivity of a translation. Specifically, connect $u$ and $v$ by an edge iff $u = v \pm 2 \mod n$, where $n$ is the size of the order. Then the resulting graph is either a single cycle or a union of two cycles depending on whether $n$ is odd or even respectively.
\end{proof}

Since, e.g., connectivity and acyclicity are not Gaifman-local queries \cite{Gu84,Sch13}, it follows that traversal-invariance is strictly more expressive than order-invariance.

\subsection{Directed reachability}\label{subsec:Directed reachability}

Here we deal with the question of \emph{directed} $st$-connectivity using BFT invariance. In fact, we will need something more than the directed graph structure, but the result will be invariant of it, an apparent asymmetry with the undirected case that will be resolved in the next section.

This construction is substantially more sophisticated than our examples above. We reduce directed reachability to an equidistance problem over undirected graphs, which we solve with the appropriate BFT-invariant sentence.

\begin{definition}\label{successor expansions}
If $A$ is a finite structure, we say that a \emph{successor expansion} $(A,S)$ of $A$ is a structure of the form $(A,\min,\max,S)$, where $\min$ and $\max$ are constants and $S$ is a successor \emph{function} on a total order with endpoints $\min$ and $\max$. 

If $K$ is a signature, let $(K,S)$ be the signature of successor expansions of $K$-structures.\footnote{Note that there is no symbol for the order with respect to which $S$ is a successor function in the signature $(K,S)$.}
\end{definition}

\begin{definition}[Successor Invariance]
 For any family $\mathcal{K}$ of finite structures, let $\mathcal{K^S}$ be the set of all successor expansions of $\mathcal{K}$. A query $Q$ over $\mathcal{K^S}$ is \emph{successor-invariant} in case for any $A,B \in \mathcal{K^S}$, if $A|_K \cong B|_K$, then $A \models Q \iff B \models Q$.
 
 For any $C \in \mathcal{K}$, we say $C \models (\mathfrak{S} S)\,Q$ iff some (equivalently, any) successor expansion of $C$ satisfies $Q$.
\end{definition}

\begin{definition}
 Let $\mathcal{D}_n$ be the family of all finite $n$-pointed directed graphs, and $\mathcal{G}_n$ be the family of all finite $n$-pointed graphs. Let $\mathcal{D}_n^S$ be the family of all successor expansions of all finite directed $n$-pointed graphs.
\end{definition}

\paragraph{The interpretation $\rho$.} We present an interpretation defined in \cite{Tom91} that translates directed successor graphs into undirected graphs. Let $(x,y,z)$ be the constants of $\Gamma_3$ and $(s,t)$ be the constants of $(\Gamma_2,S)$. Consider the binary interpretation $\rho : \Gamma_3 \to (\Gamma_2,S)$ defined by 
$$ E^\rho(u,a;v,b) \equiv \big(S(a)=b \wedge E(u,v)\big) \vee \big(S(b)=a \wedge E(v,u) \big).$$
$$x^\rho = (s,\min) $$
$$y^\rho = (s,\max) $$
$$z^\rho = (t,\max) $$
Then $\rho$ is left total as an interpretation $\mathcal{D^S_\text{2} \to G_\text{3}}$, because the predicate $E^\rho$ is visibly symmetric. Note that $\rho$ is also quantifier-free. We can express the $st$ reachability problem on $D \in \mathcal{D}^S_2$ into an equidistance problem on $D^\rho$. 
A proof of the following lemma is presented in \cite{Tom91}, and reproduced in the Appendix.

\begin{lemma} \label{lem:rho-property}
For any graph $D \in \mathcal{D}^S_2$, there is a directed path from $s$ to $t$ in $D$ iff the vertices $y$ and $z$ are equidistant from $x$ in $D^\rho$. Even stronger, if there is no directed path from $s$ to $t$ in $D$, then either $d(x,y)$ or $d(x,z)$ is undefined or
$$|d(x,y) - d(x,z) | \ge 2,$$
where $d$ indicates distance in $D^\rho$.
\end{lemma}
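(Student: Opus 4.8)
The plan is to exploit the \emph{layered} structure that $\rho$ imposes on $D^\rho$. First I would assign to each vertex $(u,a)$ of $D^\rho$ the \emph{level} $\ell(a)\in\{0,\dots,n-1\}$ equal to the position of $a$ in the underlying order (so $\ell(\min)=0$ and $\ell(\max)=n-1$, where $n$ is the number of vertices of $D$). By the definition of $E^\rho$, an edge between $(u,a)$ and $(v,b)$ forces $b=S(a)$ or $a=S(b)$, i.e.\ $|\ell(a)-\ell(b)|=1$, so every edge joins consecutive levels; moreover an edge climbing from level $i$ to level $i+1$ between the $u$-copy and the $v$-copy exists exactly when $E(u,v)$. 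The immediate consequence, which is the whole engine of the proof, is that every walk in $D^\rho$ from $x=(s,\min)$ to a vertex at level $\ell$ has length $\ge\ell$ and length $\equiv\ell\pmod 2$, since each step changes the level by exactly $\pm 1$.

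Next I would record the correspondence between \emph{level-monotone} walks and directed walks in $D$. A walk from $x$ to a level-$(n-1)$ vertex of the minimum possible length $n-1$ can use only upward steps (writing $p,q$ for the numbers of up- and down-steps, $p-q=n-1$ together with $p+q=n-1$ force $q=0$), and an upward walk from the $s$-copy at level $0$ is precisely a directed walk $s=u_0,u_1,\dots,u_{n-1}$ of length $n-1$ in $D$. Here I would make explicit the one standing assumption the construction needs, namely that $E$ is reflexive (every vertex carries a self-loop); this is harmless for $st$-reachability and lets a directed path of any length $\le n-1$ be padded to a directed walk of length exactly $n-1$ by waiting at a vertex. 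Under this assumption, $(u,\max)$ lies at distance exactly $n-1$ from $x$ iff $u$ is reachable from $s$ in $D$.

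From these two facts the lemma falls out. Applying the correspondence to $u=s$ (trivially reachable from itself via the self-loop) gives $d(x,y)=n-1$ unconditionally, so in particular $d(x,y)$ is always defined. Applying it to $u=t$ gives $d(x,z)=n-1$ iff there is a directed path from $s$ to $t$. If such a path exists, then $d(x,z)=n-1=d(x,y)$ and $y,z$ are equidistant from $x$. If no directed path exists, then $d(x,z)\neq n-1$; but the level lower bound gives $d(x,z)\ge n-1$ and the parity invariant gives $d(x,z)\equiv n-1\pmod 2$, so either $(t,\max)$ is unreachable from $x$ (so $d(x,z)$ is undefined) or $d(x,z)\ge n+1=d(x,y)+2$, whence $|d(x,y)-d(x,z)|\ge 2$. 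This proves the ``iff'' and its strengthening simultaneously.

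The step I expect to be the main obstacle is pinning down the parity-and-lower-bound invariant rigorously and using it to upgrade ``$d(x,z)\neq n-1$'' all the way to a gap of size at least $2$; this is exactly where the construction buys its robustness, since it is what will later let a BFT-invariant sentence separate equidistance from near-equidistance. A secondary point demanding care is the reflexivity assumption: without self-loops $d(x,y)$ need not equal $n-1$ (a two-vertex example with the single edge $s\to t$ already gives $d(x,y)=\infty$ while $d(x,z)=1$, breaking equidistance despite a directed path), so I would either build the loops into $D$ or check that the surrounding reduction supplies them before $\rho$ is applied.
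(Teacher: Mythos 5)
Your proof is correct and follows essentially the same route as the paper's: the second-coordinate (level) lower bound, the parity invariant forcing every $x$--$z$ path to have length $\equiv n-1 \pmod 2$, and the correspondence between length-$(n-1)$ walks in $D^\rho$ and directed walks in $D$ obtained by padding with self-loops. One point in your favor: you make explicit the reflexivity assumption that the paper's proof uses silently---its path $(s,0)-(s,1)-\cdots-(s,n-1)$ and its padding segment $(r_{\ell-1},\ell-1)-\cdots-(r_{\ell-1},n-1)$ require $E(s,s)$ and $E(t,t)$ respectively---and your two-vertex counterexample correctly shows that the lemma as literally stated fails without such loops, so this hypothesis (or a harmless normalization supplying it, e.g.\ in the configuration graphs to which $\rho$ is later applied) genuinely needs to be recorded.
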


\begin{definition}
Let $\mathcal{G}'_3$ be the family of finite 3-pointed undirected graphs $(G,x,y,z)$ such that if $x$, $y$, and $z$ are connected,
$$ |d(x,y) - d(x,z)| \neq 1 .$$
\end{definition}

By Lemma \ref{lem:rho-property}, $\rho$ is in fact a left-total interpretation $\mathcal{D}^S_2 \to \mathcal{G}'_3$.

\paragraph{Breadth-first traversals and quasi-levels}
On a graph with a distinguished source for each connected component, vertices are naturally partitioned into \emph{levels} according to their distance from their respective source. If we fix a BFT of a graph, and let the source of each connected component be its least element, then the resulting levels induce intervals in that traversal.\footnote{Recall that connected components induce intervals of a traversal, so it suffices to observe that levels induce intervals within connected components.} Moreover, every edge of the graph is either within levels or between adjacent levels. The least neighbor of every vertex (except the source) is in the previous level.

It is not clear that it is possible to define the property that two nodes are in the same level using a first-order formula on a graph expanded by a BFT. However we can do almost as well.

\begin{definition}
 Let $(V,E,<)$ be a finite graph expanded by a breadth-first traversal. A \emph{quasi-level} is a nonempty interval $I$ of $(V,E)$ such that $w \in I \iff p(w) < v \le w$, where $v$ is the least element of $I$ and $p(w)$ the least neighbor of $w$.
\end{definition}

For example, consider the following tree with the breadth-first ordering indicated in the subscript. Then $\{v_2,v_3\}$, $\{v_3,v_4,v_5\}$, $\{v_6,\dots,v_{11}\}$, and $\{v_{13},v_{14},v_{15}\}$ are quasi-levels but $\{v_5,\dots,v_{11}\}$ and $\{v_4,v_5,v_6\}$ are not. In the first counterexample, $v_{11} \in I$, but $p(v_{11})=v_5$, the least element of $I$, and in the second counterexample, $p(v_7) < v_4 < v_7$ and $v_4$ is the least element of $I$, but $v_7 \notin I$.

\begin{center}
\begin{tikzpicture}
\Tree
[.$v_1$
	[.$v_2$
		[.$v_4$
			[.$v_8$
			]
			[.$v_9$
			]
		]
		[.$v_5$
			[.$v_{10}$
			]
			[.$v_{11}$
			]
		]
	]
	[.$v_3$
		[.$v_6$
			[.$v_{12}$
			]
			[.$v_{13}$
			]
		]
		[.$v_7$
			[.$v_{14}$
			]
			[.$v_{15}$
			]
		]
	]
]
\end{tikzpicture}

\end{center}

Observe that it is easy to define when two vertices $v$ and $w$ occur in a common quasi-level, by the formula
$$ (p(w) < v \le w) \vee (p(v) < w \le v) .$$

Notice that if two vertices occur in a common quasi-level, then their distances from their (necessarily common) source cannot differ by more than 1. If two vertices occur in no common quasi-level, then either they are in different connected components, or the distances from their common source cannot be equal.

\paragraph{The interpretation $\tau$.}
We define a 2-dimensional interpretation $\tau : \Gamma_6 \to \Gamma_3$. Let $$(x_1,y_1,z_1,x_2,y_2,z_2)$$ be the constants in $\Gamma_6$ and $(x,y,z)$ be the constants in $\Gamma_3$. Given $G \in \mathcal{G}_3$, the domain of $G^\tau$ consists of ``two copies'' of $G$, which we achieve by $\partial^\tau (u,v) \iff v = x \vee v = y$. Within each copy, we inherit the edge relation from $G$, and we let, e.g., $x_i$ be the vertex corresponding to $x$ in copy $i$. We do not put any edges between the two copies except for connecting $x_1$ and $x_2$. Notice that $\tau$ is quantifier-free.

\begin{definition}
 Let $\mathcal{G}'_6$ be $\{ G^\tau : G \in \mathcal{G}'_3\}$. 
\end{definition} 

Then (by definition), $\tau$ is a left-total interpretation $\mathcal{G}'_6 \to \mathcal{G}'_3$. Moreover,

\begin{theorem}\label{formula-psi}

Let $\mathcal{B}$ be the set of all expansions of graphs in $\mathcal{G}'_6$ by a breadth-first traversal. There is a $(\mathcal{G}'_6,\mathcal{B})$-invariant formula $\psi$ such that for any $(G,x,y,z) \in \mathcal{G}'_3$,
$$d(x,y) = d(x,z) \implies G^\tau \models (\mathfrak{B} <)\,\psi $$
$$|d(x,y) - d(x,z)| \ge 2 \implies G^\tau \models \neg (\mathfrak{B} <)\, \psi,$$
where the second case also contains all those graphs where $x$, $y$, and $z$ are not all connected.
\end{theorem}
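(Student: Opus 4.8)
The plan is to build $\psi$ so that, for \emph{every} breadth-first traversal of $G^\tau$, it computes the single bit ``$d_G(x,y)=d_G(x,z)$ and $x,y,z$ are connected,'' a quantity that is manifestly traversal-independent and hence $(\mathcal{G}'_6,\mathcal{B})$-invariant. The conceptual heart is that the bridge $x_1\!-\!x_2$ is a cut edge of $G^\tau$, so the two copies of $G$ can communicate only through it. Consequently, if the breadth-first source $r$ of the component of $x$ lies in copy~$1$, then every shortest path from $r$ into copy~$2$ passes through $x_2$, whence $d(r,u_2)=d(r,x_2)+d_G(x,u)$ for every vertex $u$ of $G$. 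In particular the levels of $y_2$ and $z_2$ differ by exactly $|d_G(x,y)-d_G(x,z)|$, no matter where $r$ sits inside copy~$1$. Thus in the copy \emph{not} containing the source, breadth-first levels faithfully record distances from $x$ up to a global shift, and this is what lets me reduce the equidistance question to a ``same level'' question that the quasi-level machinery can detect.

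First I would record the source-detection trick. Since $x_1$ and $x_2$ are adjacent and sit on opposite sides of the unique bridge, they lie in two consecutive levels differing by exactly one, and the lower of the two belongs to the copy containing $r$. Because a breadth-first traversal lists vertices in nondecreasing order of distance from the source, the levels occur as consecutive intervals in increasing order, so the vertex in the lower level is the $<$-smaller one. Hence ``$r$ lies in copy~$1$'' is equivalent to the purely syntactic condition $x_1<x_2$, and symmetrically for copy~$2$. This removes the need to identify copies by reachability, which would not be first-order.

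Next I would assemble the formula. Writing $p(w)$ for the least neighbour of $w$, let $\mathrm{cql}(v,w):\equiv (p(w)<v\le w)\vee(p(v)<w\le v)$ be the common-quasi-level formula, $\mathrm{root}(w):\equiv \neg\exists w'(w'<w\wedge w'Ew)$ the ``source of its component'' predicate, and let $\mathrm{conn}(u,v)$ be the traversal-invariant connectivity formula asserting that no vertex $w$ with $\mathrm{root}(w)$ satisfies $u<w\le v$ or $v<w\le u$. I then set
\[
\psi \;:\equiv\; \mathrm{conn}(x_1,y_1)\wedge\mathrm{conn}(x_1,z_1)\wedge\Big[\big(x_1<x_2\wedge\mathrm{cql}(y_2,z_2)\big)\vee\big(x_2<x_1\wedge\mathrm{cql}(y_1,z_1)\big)\Big].
\]
To verify the two implications I would argue one traversal at a time. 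If $x,y,z$ are connected, all six constants lie in one component and the two $\mathrm{conn}$ conjuncts hold; by the source-detection trick exactly one of $x_1<x_2$, $x_2<x_1$ holds and it selects the source-free copy, where the levels of $y$ and $z$ differ by $|d_G(x,y)-d_G(x,z)|$. If $d_G(x,y)=d_G(x,z)$, these vertices are equidistant from the common source, so by the quasi-level observations they share a quasi-level, $\mathrm{cql}$ holds, and $G^\tau\models(\mathfrak{B}<)\,\psi$. If instead $|d_G(x,y)-d_G(x,z)|\ge 2$, then in the selected copy the levels differ by at least $2$, so $\mathrm{cql}$ fails there (a common quasi-level forces a level gap of at most one), while the other disjunct is killed by its order guard; thus $G^\tau\models\neg(\mathfrak{B}<)\,\psi$. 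Finally, if $x,y,z$ are not all connected, the guard $\mathrm{conn}(x_1,y_1)\wedge\mathrm{conn}(x_1,z_1)$ fails and $\psi$ is false, covering the remaining graphs of the second implication.

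It remains to check invariance, which is where I would be most careful: the subformulas $x_1<x_2$ and $\mathrm{cql}(\cdot,\cdot)$ are \emph{not} individually traversal-invariant, since the source may fall in either copy. The point is that the disjuncts are arranged so that whichever copy is source-free is precisely the one whose $\mathrm{cql}$ is consulted, so the bracketed expression always evaluates to ``$d_G(x,y)=d_G(x,z)$''; together with the invariant $\mathrm{conn}$ guard this makes $\psi$ compute a traversal-independent bit. The main obstacle throughout is exactly this tension — breadth-first levels measure distance from the arbitrary source, not from $x$ — and the two-copy/bridge construction together with the order-based source detection is what converts an uncontrolled source into a reliable measurement of distance from $x$.
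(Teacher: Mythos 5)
Your proposal is correct and takes essentially the same approach as the paper's own proof: the same formula $\psi$ (connectivity of the constants together with the order-guarded quasi-level test $x_1<x_2 \to \mathrm{cql}(y_2,z_2)$, $x_2<x_1 \to \mathrm{cql}(y_1,z_1)$), the same reduction of equidistance in $G$ to level-equality in the source-free copy via the bridge edge, and the same ``invariance follows from correctness'' argument. If anything, you spell out the source-detection step ($x_1<x_2$ iff the source lies in copy 1) that the paper asserts without proof.
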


\noindent
(The proof is deferred to the Appendix.)

By composing the interpretation $\rho$ with the interpretation $\tau$, we see that for any successor expansion of a finite 2-pointed directed graph $D \in \mathcal{D}_2^S$, there is a path from $s$ to $t$ in $D$ if and only if $D^{\rho \tau} \models (\mathfrak{B} <)\, \psi$. Hence, 

\begin{corollary}\label{directed reachability is BFT inv}
 The directed reachability query is BFT-invariant definable over $\mathcal{D}^S_2$.
\end{corollary}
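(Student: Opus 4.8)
The plan is to realize directed reachability as a single basic BFT-invariant definable query, using the composite interpretation $\rho\tau : \Gamma_6 \to (\Gamma_2,S)$ together with the formula $\psi$ from Theorem \ref{formula-psi}. First I would invoke the standard fact that the composition of two elementary (injective) interpretations is again an interpretation, so that $\rho\tau$ is a genuine $4$-dimensional interpretation (being the composite of the binary $\rho$ and the $2$-dimensional $\tau$) satisfying $D^{\rho\tau} = (D^\rho)^\tau$ for every $D \in \mathcal{D}^S_2$; the fundamental property of interpretations then lets me pass freely between the truth of $\psi$ under a breadth-first traversal of $(D^\rho)^\tau$ and the truth of its $\rho\tau$-translation on $D$. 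Next I would check left-totality of the composite: since $\rho$ is left-total $\mathcal{D}^S_2 \to \mathcal{G}'_3$ and $G^\tau \in \mathcal{G}'_6$ for every $G \in \mathcal{G}'_3$ (indeed $\mathcal{G}'_6 = \{G^\tau : G \in \mathcal{G}'_3\}$ by definition), the composite $\rho\tau$ is left-total $\mathcal{D}^S_2 \to \mathcal{G}'_6$. Thus $\psi$, the family $\mathcal{G}'_6$, and $\rho\tau : \Gamma_6 \to (\Gamma_2,S)$ are exactly the data demanded by the definition of basic BFT-invariant definability over $\mathcal{D}^S_2$.

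I would then chain the two preceding results through the common intermediate object $G := D^\rho \in \mathcal{G}'_3$. By Lemma \ref{lem:rho-property}, there is a directed path from $s$ to $t$ in $D$ iff $y$ and $z$ are equidistant from $x$ in $G$, i.e.\ $d(x,y) = d(x,z)$; and when there is no such path, either $x$, $y$, $z$ fail to be connected or $|d(x,y) - d(x,z)| \ge 2$. Feeding $G$ into Theorem \ref{formula-psi}: in the first case $G^\tau \models (\mathfrak{B}<)\,\psi$, while in the second case (including the disconnected subcase, which Theorem \ref{formula-psi} explicitly absorbs into $\neg(\mathfrak{B}<)\,\psi$) we get $G^\tau \models \neg(\mathfrak{B}<)\,\psi$. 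Combining these, for every $D \in \mathcal{D}^S_2$ we obtain the biconditional: a directed path from $s$ to $t$ exists iff $D^{\rho\tau} \models (\mathfrak{B}<)\,\psi$. This is precisely the defining condition for the directed reachability query to be basic BFT-invariant definable, and a fortiori BFT-invariant definable.

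The step needing the most care — and really the conceptual crux of the argument — is verifying that no case is lost in the handoff between the two results. Lemma \ref{lem:rho-property} yields the dichotomy ``equidistant'' versus ``distances differing by at least $2$ (or undefined),'' whereas Theorem \ref{formula-psi} only promises a verdict for inputs satisfying $d(x,y) = d(x,z)$ or $|d(x,y)-d(x,z)| \ge 2$. These match exactly because $\mathcal{G}'_3$ was deliberately carved out to exclude the offending middle case $|d(x,y)-d(x,z)| = 1$ among connected triples, and the \emph{strengthened} form of Lemma \ref{lem:rho-property} guarantees that $D^\rho$ always lands in $\mathcal{G}'_3$. Hence there is no ``gap'' input on which $\psi$ is left undetermined, and the biconditional is total. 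The remaining obligations — that $\psi$ is genuinely $(\mathcal{G}'_6,\mathcal{B})$-invariant (so that the invariance quantifier $(\mathfrak{B}<)$ is well defined) and that $\rho$ and $\tau$ are quantifier-free, so that their composite is a legitimate interpretation — are already delivered by Theorem \ref{formula-psi} and by the constructions of $\rho$ and $\tau$, so no further work is required.
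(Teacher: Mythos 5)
Your proposal is correct and follows essentially the paper's own (very terse) argument: the paper derives Corollary \ref{directed reachability is BFT inv} from the single observation that composing $\rho$ with $\tau$ yields, for every $D \in \mathcal{D}^S_2$, a directed path from $s$ to $t$ iff $D^{\rho\tau} \models (\mathfrak{B}<)\,\psi$, which is exactly what you establish by chaining Lemma \ref{lem:rho-property} with Theorem \ref{formula-psi}, using that the strengthened form of the lemma places $D^\rho$ in $\mathcal{G}'_3$ and that $\rho\tau$ is left-total into $\mathcal{G}'_6$. Your only inessential slip is the closing remark that quantifier-freeness of $\rho$ and $\tau$ is needed for their composite to be a legitimate interpretation --- it is not (quantifier-freeness matters only later, for Theorems \ref{main-1} and \ref{main-2}) --- but nothing in your argument depends on this.
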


\section{Descriptive Complexity}\label{sec:descriptive complexity}
In this section we obtain the main results of this paper: a characterization of deterministic and nondeterministic logarithmic space by traversal and breadth-first traversal invariance quantifiers respectively.

\subsection{Multihead finite automata}

A \emph{nondeterministic multihead finite automaton} (NMFA) is an automaton with a single tape, finitely many heads on that tape, and a finite control. Unlike a Turing machine, the tape is not infinite; rather, it is initialized to the input string plus two special characters on either side to mark the left and right endpoints. Also unlike a Turing machine, the heads cannot write, they can only move left, right, or stay put depending on which characters they are reading.
A single state is designated as accepting; if the computation enters this state then we say it halts. The language of an NMFA is exactly the set of strings it halts on.

Formally, an NMFA consists of a set $Q$ of states, some number $k \in \mathbb{N}$ of heads, an input alphabet $\Sigma$, a start state $q_0 \in Q$, an accept state $q_f \in Q$, and a transition relation $\delta$ which relates $k$-tuples in $\Sigma \cup \{\triangleright,\triangleleft\}$ with $\{-1,0,1\}^k$. If any head is reading the left (respectively right) endpoint character, no subsequent transition may move that head right (respectively left). Futhermore, if the current state is $q_f$, then the transition relation moves all heads to the left (if possible) or fixes them if they are already at the left endpoint.

A \emph{configuration} of an NMFA consists of the input string, the current state, and the location of the heads. The transition relation induces a relation on the space of configurations in the natural way. The \emph{intial configuration} is the one in which the state is $q_0$ and all heads are at the left. The \emph{final configuration} is the same but with state $q_f$. By the stipulation of the transition relation, if an NMFA enters $q_f$, then it will always enter the final configuration.

The \emph{configuration graph} of an NMFA on a particular input $x$ is a 2-pointed directed graph whose vertices are the set of configurations on $x$ and whose edge relation is the graph of the relation induced by the transition function. The source and sink are the initial and final configurations respectively.

\paragraph{Strings and pointed graphs as structures}
Let $\Gamma_2$ be the language of 2-pointed graphs, and let $(\Gamma_2,S)$ be the language of 2-pointed successor graphs, with two (additonal) constants $\min$ and $\max$, and a successor function.

Let $\Sigma$ be a finite alphabet. We think of a string $x = x_0x_1\dots x_{n-1}$ in $ \Sigma^\star$ as a finite structure with domain $\{0,1,\dots,n-1\}$, a predicate $\sigma$ for each $\sigma \in \Sigma$ with semantics $$(\forall i < n)\ x \models \sigma(i) \iff x_i = \sigma,$$
constants min and max naming 0 and $n-1$, and finally a successor function taking index $i$ to index $i+1$.\footnote{This is a common, ``folklore,'' method of representing strings as structures. Often one takes a total ordering $<$ over the indices of a string instead of the successor function (see Libkin \cite{Lib04}), but for our purposes, either will work.}

We henceforth overload the meaning of $\Sigma$ to indicate not only an alphabet, but also the signature of strings in that alphabet, so that the terms ``finite $\Sigma$-structure'' and ``member of $\Sigma^\star$'' denote the same objects.

Crucial to our work is that for a fixed NMFA, there is an interpretation that takes an input string and translates it into the associated configuration graph.

\begin{theorem}\label{config graph-interpretation}
For every NMFA $\mathcal{M}$ with alphabet $\Sigma$ there is an interpretation $\pi : \Gamma_2 \to \Sigma$ such that for every sufficiently long string $x \in \Sigma^\star$, $x^\pi$ is isomorphic to the configuration graph of $\mathcal{M}$ on input $x$.

Furthermore, we can expand $\pi$ to an interpretation $\pi : (\Gamma_2,S) \to \Sigma$, so that $x^\pi$ is a successor expansion of the above configuration graph.

Moreover, $\pi$ can be made quantifier-free.
\end{theorem}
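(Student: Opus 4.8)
The plan is to build $\pi$ directly, realizing each configuration of $\mathcal{M}$ on input $x$ as a tuple of elements of the string structure. Fix $k$ to be the number of heads and $Q$ the state set. A configuration is determined by the current state together with the $k$ head positions, where each head position is either one of the interior tape cells $\{0,\dots,n-1\}$ (the domain of $x$) or one of the two endpoint markers. I would use a $(k+1)$-dimensional interpretation: the first $k$ coordinates hold the interior positions of the heads, and the final \emph{control} coordinate encodes the pair consisting of the state and a ``marker flag'' in $\{L,I,R\}$ for each head (left marker, interior, right marker). There are $N := |Q|\cdot 3^{k}$ such control values; since the string structure carries constants $\min,\max$ and a successor function, each of the first $N$ domain elements is named by a variable-free term $S^{i}(\min)$ for $i<N$, so for every string of length at least $N$ --- this is the meaning of ``sufficiently long'' --- the control coordinate can range over a constant-size set of named elements. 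The domain formula $\partial^\pi$ asserts that the control coordinate is one of these $N$ terms and, as a canonicity convention, that any head flagged $L$ or $R$ has its position coordinate equal to $\min$; this is quantifier-free.

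First I would define the edge formula $E^\pi$ as a finite disjunction, one disjunct per transition rule of $\mathcal{M}$. A rule fixes the current state, the $k$-tuple of scanned symbols, the $k$-tuple of moves in $\{-1,0,1\}^{k}$, and the next state. For each rule the corresponding disjunct decodes the state and flags from the source and target control coordinates (a disjunction over the named constant terms), checks that each head scans the prescribed symbol --- which, depending on that head's flag, is either $\triangleright$, $\triangleleft$, or the symbol given by the predicates $\sigma(p_i)$ --- and checks that each new position coordinate is obtained from the old one by applying the prescribed move. Every primitive here is quantifier-free over $\Sigma$: staying is equality, moving right is the successor function, moving left is the quantifier-free condition $S(p'_i)=p_i$, crossing between an interior endpoint and a marker is detected by the tests $p_i=\min$ and $p_i=\max$, and reading a marker character is encoded purely in the flag. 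The source and sink terms $s^\pi$ and $t^\pi$ are the variable-free tuples $(\min,\dots,\min,\underline{c_0})$ and $(\min,\dots,\min,\underline{c_f})$ naming the initial and final configurations, where $\underline{c_0}$ and $\underline{c_f}$ are the control terms for $(q_0,(L,\dots,L))$ and $(q_f,(L,\dots,L))$. By construction $E^\pi$ defines exactly the transition relation, so $x^\pi$ is isomorphic to the configuration graph, establishing the first claim.

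To obtain the successor-expansion version $\pi:(\Gamma_2,S)\to\Sigma$ I would additionally supply $\min^\pi$, $\max^\pi$, and a successor function $S^\pi$ on configurations. I would order configurations lexicographically and define $S^\pi$ by the definition-by-cases mechanism available for function symbols in functional signatures (this is the one place that mechanism is used): to increment a tuple one increments the least-significant coordinate via the string successor, carrying to the next coordinate whenever a coordinate already equals $\max$ and setting the carried coordinate to $\min$. Each case is a quantifier-free conjunction of equalities with $\max$, and each definiens is a tuple of terms built from the string successor, $\min$, and $\max$, so $S^\pi$ is quantifier-free; the constants $\min^\pi$ and $\max^\pi$ name the lexicographically least and greatest valid configurations.

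The main obstacle is making $S^\pi$ a genuine successor function on exactly the valid configurations rather than on all $(k+1)$-tuples: the canonicity convention in $\partial^\pi$ excludes those tuples in which a head flagged $L$ or $R$ carries a nonminimal position, so naive lexicographic incrementing can land on an excluded tuple. Because the excluded tuples are governed entirely by the constant-size control coordinate, however, the gaps have a pattern of bounded description, and ``skip to the next valid tuple'' is itself a bounded case analysis expressible quantifier-free using the named control terms. I expect the bookkeeping here --- verifying that the resulting $S^\pi$ is a well-defined successor function whose iterates from $\min^\pi$ exhaust the valid configurations and terminate at $\max^\pi$ --- to be the only genuinely delicate part; everything else is a routine, if tedious, translation of the transition table into quantifier-free formulas.
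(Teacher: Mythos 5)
Your construction is essentially the paper's own: a $(k+1)$-dimensional quantifier-free interpretation with one coordinate reserved for the finite control and $k$ coordinates for head positions, the edge relation transcribed rule-by-rule from the transition table using only equality, the letter predicates, $\min$, $\max$, and the successor function, and the successor expansion obtained by a definition-by-cases radix increment (the paper likewise notes this is the unique place the function-symbol case mechanism is needed). The one substantive difference is your $\{L,I,R\}$ flag mechanism. The paper identifies a configuration with a tuple (state index, head indices) where head indices range only over the $n$ string positions, and tests $u_i=\min$ or $u_i=\max$ inside the edge formula; it never says how a head that is actually sitting on one of the two endpoint-marker cells is represented, so its claimed bijection between domain tuples and configurations is loose on exactly the point your control coordinate of size $|Q|\cdot 3^{k}$ handles precisely. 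The price is the one you identify: in the paper's encoding the valid tuples (first coordinate less than $|Q|$, all position coordinates unconstrained) form a contiguous initial segment of the lexicographic order, so its increment-with-carry never leaves the domain, whereas your canonicity convention creates gaps and your $S^\pi$ must skip them; your observation that the gaps are governed entirely by the constant-size control value, hence skippable by a bounded quantifier-free case analysis, is correct and closes that hole. Both versions go through; yours trades a slightly more delicate successor for a more faithful configuration encoding.
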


\noindent
(The proof is deferred to the Appendix)

\begin{definition}
 An NMFA $\mathcal{M}$ is \emph{symmetric} (SMFA) in case, for any input $x$, the configuration graph of $\mathcal{M}$ on $x$ is undirected.
\end{definition}

Computability by NMFAs is known to capture exactly nondeterministic logarithmic space (NL) \cite{Hartmanis1972}, and computability by SMFAs captures at least logarithmic space (L). (In \cite{Axe12}, Axelsen considers the more restrictive \emph{reversible} MFAs, which are both deterministic and symmetric, and shows that they capture L.)

\subsection{Capturing L and NL}

\paragraph{Canonical encodings}
For any finite structure $A$, any linear order $(A,<)$, and any fixed alphabet $\Sigma$ of size at least 2, there is a \emph{canonical encoding} of $(A,<)$ as a string in $\Sigma^\star$.

This construction is the foundation of all results in descriptive complexity, and can be found in numerous texts, e.g., \cite{Lib04}. We will not repeat it here. We do note, however, that any successor expansion $(A,S)$ of $A$ induces a linear order---hence every successor expansion of any finite structure has a canonical encoding.

Even more importantly, this canonical encoding is \emph{definable} as the translation induced by a quantifier-free interpretation. The details are complicated, but can be found in Section 9.2 of \cite{Lib04}.

\begin{theorem}\label{encoding-interpretation}
For every signature $L$, there is a quantifier-free interpretation $\mu: \Sigma \to (L,S)$ such that for every successor expansion $(A,S)$ of any finite $L$-structure $A$, $(A,S)^\mu$ is the canonical encoding of $(A,S)$.
\end{theorem}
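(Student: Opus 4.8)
The plan is to construct the interpretation $\mu$ by following the standard canonical encoding of an ordered structure and checking at each stage that every bit of the output string can be addressed and computed without quantifiers, given the linear order that the successor function $S$ generates. Recall that the canonical encoding of $(A,S)$ lists, for each relation symbol $r$ of arity $m$ in $L$, the truth values of $r$ on all $m$-tuples of the domain in lexicographic order (induced by the order underlying $S$), with separator characters delimiting the blocks; an $n$-element structure thus produces a string of length $\Theta(n^m)$ for the widest relation, plus bookkeeping. The key idea is that the positions of this output string are themselves tuples of domain elements, so an interpretation of the appropriate dimension $k$ (namely the maximum arity among the relations, suitably padded) has exactly the right address space: a position in the encoding is named by a $k$-tuple, and reading off the correct alphabet symbol at that position is a finite case analysis.

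First I would fix $k$ to be large enough to index every position of the output (the maximum arity, with a constant number of extra coordinates to encode which ``block'' or separator region the position lives in, and to pad lower-arity relations). Then I would define $\partial^\mu(\bar{x})$ to carve out exactly the legitimate position-tuples, and for each alphabet symbol $\sigma \in \Sigma$ I would write the quantifier-free formula $\sigma^\mu(\bar{x})$ asserting that position $\bar{x}$ carries symbol $\sigma$. The crucial observation making everything quantifier-free is that the order induced by $S$ is quantifier-free definable from $S$ together with $\min$ and $\max$ only up to a point---ordering two arbitrary elements genuinely needs iterating $S$---so instead of comparing elements by the order, I would let the tuple coordinates themselves carry the positional information, so that lexicographic comparisons between positions reduce to coordinatewise equalities and successor tests, all quantifier-free. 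The symbol at a given position is then determined by whether the relevant atomic formula $r(\text{coordinates})$ holds, again quantifier-free, together with the finite data of which separator or block-boundary the position occupies, read off from the padding coordinates by equality checks against $\min$ and $\max$ and by $S$-tests.

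The main obstacle is precisely the tension just flagged: the canonical encoding is defined relative to the \emph{linear order}, but the signature $(L,S)$ offers only the successor function, and the full order is not quantifier-free definable from $S$. The resolution, which I expect to carry the weight of the argument, is that one never needs to compare two \emph{given} domain elements in the order; one only needs to enumerate positions in lexicographic order, and lexicographic order on position-tuples is exactly the order in which the interpretation's domain $\partial^\mu[A^k]$ is traversed when the output string is itself viewed as an ordered $\Sigma$-structure. In other words, the successor structure on the output string's indices is induced by $S$ acting coordinatewise with carry---an odometer---and the odometer successor is quantifier-free definable from $S$, $\min$, and $\max$ by the usual finite case split on which coordinates are at $\max$. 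I would therefore verify that the output successor relation, the output $\min$ and $\max$, and each output alphabet predicate are all quantifier-free in $(L,S)$, appealing to the detailed construction in Section~9.2 of \cite{Lib04} for the bookkeeping while only needing to re-examine the quantifier-free-ness claim, since that is the one feature the cited text does not emphasize. Once these pieces are assembled, the Fundamental Property of interpretations gives that $(A,S)^\mu$ is literally the canonical encoding string, completing the proof.
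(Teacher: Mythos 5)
Your proposal is correct, and it supplies an argument that the paper itself omits: the paper's entire ``proof'' of Theorem~\ref{encoding-interpretation} is a deferral to Section~9.2 of \cite{Lib04}, with the remark that the details are complicated. What your write-up adds, and what the citation alone does not cover, is exactly the point you flag: Libkin's construction is carried out over structures equipped with a linear order, where lexicographic comparison of position-tuples is first-order definable but not quantifier-free from successor alone. Your resolution---never compare two given domain elements in the order, but instead define only the successor of a position-tuple by the odometer (``increment with carry'') case split on which coordinates equal $\max$, each case being a quantifier-free guard whose definiens is a tuple of terms---is precisely the right move, and it is the same device the paper uses elsewhere, in the proof of Theorem~\ref{config graph-interpretation}, to define a successor on $(k+1)$-tuples of head positions. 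The alphabet-symbol predicates then reduce to atomic $L$-formulas on the tuple coordinates together with equality tests against $\min$ and $\max$, so the whole interpretation is quantifier-free in the paper's sense of definition-by-cases for function symbols. One small caveat: because you use padding coordinates as block tags to separate the regions encoding the different relations, domains with very few elements may not supply enough distinct tag values, so strictly your construction yields the encoding only for all sufficiently large structures; this is consistent with how the theorem is actually invoked downstream (Theorems~\ref{thm:logspace is TI}, \ref{main-1}, and \ref{main-2} all carry a ``sufficiently large'' proviso), but it leaves finitely many small structures to be handled separately if one insists on the theorem's literal statement.
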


We now state the definition of a complexity-bounded query over finite structures, for which we need to imagine models of computation that take finite structures as input. We follow the standard method in descriptive complexity, which is to take a model of computation that operates on strings, and feed it the encoding $(A,S)^\mu$ of a structure $A$. Of course this encoding is not canonical given only $A$; for a well-defined query, we demand that the result of the computation is invariant of the particular expansion $(A,S)$.

Now we are in a position to state:

\begin{theorem}\label{thm:logspace is TI}
For every signature $L$ and every logarithmic space query $Q$ over finite $L$-structures, there is a quantifier-free interpretation $\gamma : \Gamma_2 \to (L,S)$ such that for every sufficiently large finite $L$-structure $A$, $$ A \in Q \iff A \models (\mathfrak{S}S)\, ((\mathfrak{T} <)\, \varphi)^\gamma,$$
where $(\mathfrak{T} <)\, \varphi$ is the sentence in the language of 2-pointed ordered graphs asserting that the distinguished vertices are connected.
\end{theorem}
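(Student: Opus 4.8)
The plan is to realize the three invariance quantifiers in the statement as the three successive stages of a logspace computation run on a canonical encoding: the $\mathfrak{S}$ stage expands the structure by a successor, the interpretation $\gamma$ encodes it and builds the configuration graph of a symmetric multihead automaton, and the $\mathfrak{T}$ stage applied to $\varphi$ reads off acceptance as undirected reachability in that graph. To begin, I would invoke the machine characterization of L. Since $Q$ is a logarithmic space query, the set of canonical encodings $\{(A,S)^\mu : A \in Q\}$ is a logspace language---membership amounts to checking well-formedness of the encoding and then running the logspace decision procedure for $Q$, both in L---and it is independent of the choice of successor $S$ by successor-invariance of $Q$. By Axelsen's result that reversible, hence symmetric, multihead finite automata capture L, I fix a symmetric MFA $\mathcal{M}$ deciding this language, so that $\mathcal{M}$ accepts $(A,S)^\mu$ if and only if $A \in Q$, for every finite $L$-structure $A$ and every successor expansion $S$.

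Next I would build $\gamma$ by composing interpretations. Theorem \ref{encoding-interpretation} supplies a quantifier-free $\mu : \Sigma \to (L,S)$ with $(A,S)^\mu$ the canonical encoding, and Theorem \ref{config graph-interpretation} applied to $\mathcal{M}$ supplies a quantifier-free interpretation $(\Gamma_2,S) \to \Sigma$ taking a sufficiently long string to a successor expansion of the configuration graph of $\mathcal{M}$ on it. Forgetting the successor coordinate yields a quantifier-free interpretation $\Gamma_2 \to \Sigma$ producing the bare 2-pointed configuration graph, which is undirected exactly because $\mathcal{M}$ is symmetric; composing it with $\mu$ gives the desired quantifier-free $\gamma : \Gamma_2 \to (L,S)$ (translating a quantifier-free formula introduces no quantifiers, so the composite stays quantifier-free). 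Thus $A^\gamma$, computed from any successor expansion $(A,S)$, is the undirected 2-pointed configuration graph of $\mathcal{M}$ on $(A,S)^\mu$, with source and sink the initial and final configurations. For $\varphi$ I take the $(\mathcal{G}_2,\mathcal{T})$-invariant sentence asserting $st$-connectivity, whose traversal-invariant definability of undirected reachability was established earlier.

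I would then chase the equivalences, unwinding each quantifier by its definition together with the fundamental property of interpretations. The expression $A \models (\mathfrak{S}S)\,((\mathfrak{T}<)\,\varphi)^\gamma$ holds iff, for some (equivalently any) successor expansion $(A,S)$, the graph $(A,S)^\gamma$ satisfies $(\mathfrak{T}<)\,\varphi$; iff its source and sink lie in a common connected component; iff $\mathcal{M}$ reaches its final configuration on $(A,S)^\mu$---here symmetry equates directed reachability with undirected connectivity---iff $A \in Q$. The restriction to sufficiently large $A$ absorbs the ``sufficiently long string'' hypothesis of Theorem \ref{config graph-interpretation}, since large structures have long encodings.

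I expect the main obstacle to lie not in any single calculation but in the bookkeeping of the nested invariances. One must verify that the answer is well-defined at each layer: that connectivity is genuinely traversal-invariant, so that $(\mathfrak{T}<)\,\varphi$ has meaning on $(A,S)^\gamma$; and that acceptance of $\mathcal{M}$---and hence the whole expression---does not depend on $S$, so that $(\mathfrak{S}S)$ has meaning. The crux is that the identification of machine acceptance with undirected $st$-connectivity of the configuration graph rests essentially on the symmetry of $\mathcal{M}$, which is precisely why a symmetric automaton, and the plain traversal quantifier $\mathfrak{T}$ rather than the breadth-first $\mathfrak{B}$, is the correct tool for capturing L.
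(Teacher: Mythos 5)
Your proposal is correct and follows essentially the same route as the paper's own proof: fix a symmetric multihead automaton $\mathcal{M}$ deciding $Q$ (via Axelsen), set $\gamma = \mu\pi$ by composing the quantifier-free encoding interpretation of Theorem \ref{encoding-interpretation} with the configuration-graph interpretation of Theorem \ref{config graph-interpretation}, take $\varphi$ to be traversal-invariant $st$-connectivity, and conclude by successor-invariance of the resulting equivalence. The additional bookkeeping you flag (well-definedness of each invariance layer, symmetry giving undirectedness, ``sufficiently large'' absorbing ``sufficiently long'') is handled implicitly in the paper and does not change the argument.
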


\begin{proof}
Let $\mu:\Sigma \to (L,S)$ be the interpretation given 
in Theorem \ref{encoding-interpretation}, let $\mathcal{M}$ be an SMFA deciding $Q$, let $\pi : \Gamma_2 \to \Sigma$ be the associated interpretation from Theorem \ref{config graph-interpretation}, and let $\gamma = \mu \pi$. Fix a finite $L$-structure $A$ and an arbitrary successor expansion $(A,S)$.

Then $\mathcal{M}$ accepts the string $(A,S)^\mu$ iff $A \in Q$. But, $(A,S)^{\mu \pi} = (A,S)^\gamma$ is the configuration graph of $\mathcal{M}$ on $(A,S)^\mu$, so $\mathcal{M}$ accepts $(A,S)^\mu$ just in case the distinguished vertices of $(A,S)^\gamma$ are connected. In other words,
$$ A \in Q \iff (A,S)^\gamma \models  (\mathfrak{T} <)\, \varphi.$$ 
Therefore,
$$ A \in Q \iff (A,S) \models ((\mathfrak{T} <)\, \varphi)^\gamma ,$$
and since the right hand side is independent of the particular successor expansion,
$$ A \in Q \iff A \models (\mathfrak{S}S)\, ((\mathfrak{T} <)\, \varphi)^\gamma. \qedhere$$
\end{proof}

\begin{theorem}
For every signature $L$ and every \emph{NL} query $Q$ over finite $L$-structures, there is a quantifier-free interpretation $\gamma : \Gamma_6 \to (L,S)$ such that for every sufficiently large finite $L$-structure $A$, $$ A \in Q \iff A \models (\mathfrak{S}S)\, ((\mathfrak{B} <)\, \psi)^\gamma,$$
where $\psi$ is the sentence from Theorem \ref{formula-psi} in the language $(\Gamma_6,<)$.
\end{theorem}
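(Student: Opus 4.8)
The plan is to mirror the proof of Theorem \ref{thm:logspace is TI}, replacing the symmetric machine and undirected reachability by a general nondeterministic machine and the directed-reachability machinery of Section \ref{subsec:Directed reachability}. First I would fix an NMFA $\mathcal{M}$ deciding $Q$; this exists because NMFAs capture NL \cite{Hartmanis1972}, and $Q$, being an NL query, is decided by some NMFA running on the canonical encoding of its input, with output invariant of the chosen successor expansion (this is exactly the descriptive-complexity setup fixed just before Theorem \ref{thm:logspace is TI}). I would then assemble the single interpretation $\gamma = \mu\pi\rho\tau : \Gamma_6 \to (L,S)$, where $\mu : \Sigma \to (L,S)$ is the canonical-encoding interpretation of Theorem \ref{encoding-interpretation}, $\pi : (\Gamma_2,S) \to \Sigma$ is the \emph{successor-expanded} configuration-graph interpretation of Theorem \ref{config graph-interpretation}, and $\rho$ and $\tau$ are the interpretations of Section \ref{subsec:Directed reachability}.

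The computation then chains as follows. For a finite $L$-structure $A$ with an arbitrary successor expansion $(A,S)$, the structure $(A,S)^\mu$ is the canonical encoding $x$; then $x^\pi$ is a successor expansion $D \in \mathcal{D}^S_2$ of the configuration graph of $\mathcal{M}$ on $x$, whose distinguished source and sink are the initial and final configurations; and $(A,S)^\gamma = D^{\rho\tau}$ lies in $\mathcal{G}'_6$ because $\rho$ is left-total into $\mathcal{G}'_3$ (Lemma \ref{lem:rho-property}) and $\tau$ is left-total into $\mathcal{G}'_6$. Now $A \in Q$ iff $\mathcal{M}$ accepts $x$, iff there is a directed path from source to sink in $D$, iff $D^{\rho\tau} \models (\mathfrak{B}<)\,\psi$ by Corollary \ref{directed reachability is BFT inv}. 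Since $D^{\rho\tau} = (A,S)^\gamma$, this reads $(A,S) \models ((\mathfrak{B}<)\,\psi)^\gamma$ by the definition of the translated invariance quantifier.

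To finish I would argue exactly as in the L case that the right-hand side does not depend on the successor expansion: the equivalence $A \in Q \iff (A,S) \models ((\mathfrak{B}<)\,\psi)^\gamma$ holds for \emph{every} $S$ while the left-hand side never mentions $S$, so $((\mathfrak{B}<)\,\psi)^\gamma$ is successor-invariant; hence $(\mathfrak{S}S)$ is well-defined and $A \in Q \iff A \models (\mathfrak{S}S)\,((\mathfrak{B}<)\,\psi)^\gamma$. Separately I would confirm that $\gamma$ is quantifier-free: each of $\mu$, $\pi$, $\rho$, $\tau$ is quantifier-free, and the composite of quantifier-free interpretations is again quantifier-free, since translating a quantifier-free formula introduces no new quantifiers.

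The genuinely nontrivial content is entirely borrowed from the directed-reachability development: a general NMFA has a \emph{directed} configuration graph, so traversal-invariant reachability no longer suffices, and one must route directed reachability through the equidistance reduction $\rho$, the doubling interpretation $\tau$, and the BFT-invariant sentence $\psi$. Consequently the main thing to get right is bookkeeping rather than mathematics: keeping the two distinct successors apart (the outer successor $S$ on $A$, bound by $\mathfrak{S}$, versus the inner successor on the configuration graph synthesized by $\pi$ and consumed by $\rho$), verifying the directions and left-totality of the four-fold composition $\mu\pi\rho\tau$, and noting that the ``sufficiently large $A$'' hypothesis is precisely what makes $(A,S)^\mu$ long enough for Theorem \ref{config graph-interpretation} to apply.
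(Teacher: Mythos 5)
Your proposal is correct and takes essentially the same route as the paper's own proof: fix an NMFA for $Q$, compose $\gamma = \mu\pi\rho\tau$, chain the equivalences $A \in Q \iff \mathcal{M}$ accepts $(A,S)^\mu \iff$ directed source-to-sink reachability in $(A,S)^{\mu\pi} \iff (A,S)^{\mu\pi\rho\tau} \models (\mathfrak{B}<)\,\psi$ via Corollary \ref{directed reachability is BFT inv}, and then discharge the successor expansion by invariance. Your extra bookkeeping (quantifier-freeness of the composite, left-totality of $\rho$ and $\tau$, the distinction between the outer and inner successors, and the role of the ``sufficiently large'' hypothesis) is all consistent with what the paper's shorter proof leaves implicit.
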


\begin{proof}
Let $\mu:\Sigma \to (L,S)$ be the interpretation given 
in Theorem \ref{encoding-interpretation}, let $\mathcal{M}$ be an NMFA deciding $Q$, let $\pi : (\Gamma_2,S) \to \Sigma$ be the associated interpretation from Theorem \ref{config graph-interpretation}. Recall the interpretations $\rho:\Gamma_3 \to (\Gamma_2,S)$  and $\tau:\Gamma_6 \to \Gamma_3$ from Section \ref{subsec:Directed reachability}. Finally, let $\gamma = \mu \pi \rho \tau$. Fix a finite $L$-structure $A$ and an arbitrary successor expansion $(A,S)$.

Then $\mathcal{M}$ accepts the string $(A,S)^\mu$ iff $A \in Q$. But $\mathcal{M}$ accepts $(A,S)^\mu$ just in case there is a path from source to sink over the graph $(A,S)^{\mu \pi}$. By Corollary \ref{directed reachability is BFT inv}, this occurs just in case $(A,S)^{\mu \pi \rho \tau} \models (\mathfrak{B} <)\,\psi$. But $(A,S)^{\mu \pi \rho \tau} = (A,S)^\gamma$.

Since the above is independent of the particular successor expansion $S$, 
$$ A \in Q \iff A \models (\mathfrak{S} S)((\mathfrak{B} <)\,\psi)^\gamma,$$ which completes the proof.
\end{proof}

\subsection{Logspace-computable traversals}
In the other direction, we want to show that traversals and breadth-first traversals are computable in L and NL respectively. These constructions rely on the computability in logarithmic space of undirected $st$-connectivity, and furthermore on the existence of logarithmic space \emph{universal exploration sequences} \cite{Rei08}.

Given an ordered graph $G$ and a vertex $v$, it is possible to construct, in logarithmic space, the index of the least vertex $u$ in the connected component of $v$. Simply iterate through the vertices of $G$ in order, testing connectivity with $v$, until we find a vertex that is connected.

\begin{theorem}\label{thm:logspace computable traversal}
There is a logarithmic space Turing machine which, for every finite ordered graph $(G,<)$, computes a traversal $(G,\prec)$ in the following sense: given the canonical encoding of $(G,<)$ and (indices of) two of its vertices $v$ and $w$, accepts or rejects according to whether $v \prec w$.
\end{theorem}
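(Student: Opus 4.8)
The plan is to define one specific traversal $\prec$ and show that, for any two input vertices $v,w$, the comparison $v \prec w$ is decidable in logarithmic space. I will build $\prec$ so that it manifestly arises from a single run of generic graph search, which by definition makes it a traversal.

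First I would fix the order \emph{between} connected components. For a vertex $v$, let $\ell(v)$ denote the $<$-least vertex in the connected component of $v$; as noted just before the theorem, $\ell(v)$ is logspace-computable by iterating through the vertices in $<$-order and testing connectivity with $v$ via Reingold's theorem \cite{Rei08}. For $v,w$ in distinct components (i.e.\ $\ell(v)\neq\ell(w)$), I declare $v\prec w$ iff $\ell(v)<\ell(w)$. This makes every component a $\prec$-interval and is decided by two calls to the $\ell(\cdot)$ subroutine.

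The heart of the construction is the \emph{within}-component order, for which I would use a universal exploration sequence (UXS). The vertex order $<$ canonically labels the edges incident to each vertex (by the $<$-order of its neighbors), which, after the standard logspace reduction to a regular graph, lets me invoke Reingold's logspace-constructible, polynomial-length UXS \cite{Rei08}. Starting the UXS walk at the source $\ell(v)$ of a component yields a walk visiting every vertex of that component in which every vertex other than the source is first reached along an edge from an already-visited vertex. I then set, for $v,w$ in a common component, $v\prec w$ iff the walk first visits $v$ strictly before $w$. Combining the two cases, $\prec$ is exactly the visiting order of the generic graph search that processes components in increasing $\ell$-order and, within each component, follows the UXS walk from the source: each newly added vertex lies on the boundary of the visited set $S$, except the source of each component, which is added precisely when the boundary is empty (since $S$ is then a union of complete components). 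Hence $\prec$ is a genuine output of generic graph search, so it is a traversal.

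It remains to decide $v\prec w$ in logarithmic space. If $\ell(v)\neq\ell(w)$ the comparison is immediate from the $\ell$-subroutine. Otherwise I simulate the UXS walk from $\ell(v)$, storing only the current vertex and the current position in the sequence (each $O(\log n)$ bits), generating each successive UXS symbol on the fly in logspace and taking the corresponding labeled step; I halt and output the verdict as soon as the walk reaches $v$ or $w$, whichever comes first (no separate ``first visit'' detection is needed). Since the sequence has polynomial length, this runs in polynomial time and logarithmic space. The main obstacle, and the place where the cited machinery does the real work, is ensuring that following the UXS is genuinely a logspace procedure over arbitrary (non-regular) ordered graphs: this needs the standard logspace transformation to a regular graph together with a logspace back-translation that reads off, for the currently simulated vertex, which original vertex's gadget it belongs to, so that the ``reached $v$ or $w$'' test applies. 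Granting Reingold's results, every remaining step is a routine logspace computation.
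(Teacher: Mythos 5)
Your proposal is correct and follows essentially the same route as the paper's proof: compare distinct components by their $<$-least vertices (via Reingold's logspace connectivity), order vertices within a component by first occurrence in a universal exploration sequence walk started at the component's least vertex, and observe that this is a traversal because every non-source vertex is first reached from an already-visited neighbor. Your additional attention to the regularity reduction and edge-labeling needed to apply the UXS machinery, and the observation that one can simply halt at whichever of $v,w$ appears first, are refinements of detail rather than a different argument.
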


\begin{proof}
Given two vertices $v$ and $w$ in $G$, first test whether they are in the same connected component. If not, let $v_0$ and $w_0$ be the least elements in the connected components of $v$ and $w$ respectively, and compare $v$ and $w$ according to whether $v_0 < w_0$.

Otherwise, let $v_0$ be the least element of their common connected component. If $n = |G|$, construct (using space logarithmic in $n$) a universal exploration sequence, and explore the connected component of $v$ and $w$ according to that sequence starting with $v_0$. Let $v \prec w$ iff the first occurrence of $v$ precedes the first occurrence of $w$.

We must show $(G,\prec)$ is a traversal. Notice that connected components induce intervals.  If $v$ is not the least vertex in some connected component, then its first occurrence in the universal exploration sequence has some immediate predecessor $u$ which is a neighbor. Therefore, in the traversal, $u\prec v $; hence, $v$ has some preceding neighbor. 
\end{proof}

\paragraph{Canonical BFT of an ordered graph}
Unlike the case of ordinary traversals, where the traversal $(G,\prec)$ of $(G,<)$ depends on some family of universal exploration sequences, we will define a canonical breadth-first traversal $(G,\prec_B)$ of an ordered graph $(G,<)$ and show that it can be computed in nondeterminstic logspace.

\begin{definition}
Given a finite ordered graph $(G,<)$ and vertices $v,w \in G$, let $v_0$ and $w_0$ be the $<$-least elements of the connected components of $v$ and $w$ respectively. Let $<^\star$ be the ordering on finite sequences of vertices that orders them \emph{first} by length, and \emph{then} lexicographically. Let $\vec{v}$ be the $<^\star$-least path from $v_0$ to $v$.
Then,
\begin{enumerate}
    \item if $v_0 \neq w_0$, then $v \prec_B w \iff v_0 < w_0$,
    \item if $v_0 = w_0$ then $v \prec_B w \iff \vec{v} <^\star \vec{w}$.
\end{enumerate}
\end{definition}

\begin{lemma}\label{lem:canonical-BFT}
For any finite ordered graph $(G,<)$, $(G,\prec_B)$ is a breadth-first traversal.
\end{lemma}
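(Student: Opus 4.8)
The plan is to verify the Corneil--Krueger first-order characterization of breadth-first traversals, i.e.\ to show that $(G,\prec_B) \models (\forall u \prec_B v \prec_B w)\big(uEw \to (\exists x \prec_B v)(x \preceq_B u \wedge xEv)\big)$. First I would record the structural facts that make $\prec_B$ well-behaved. The map sending each vertex $v$ to the key $(v_0,\vec v)$ (its component root and the $<^\star$-least path from $v_0$ to $v$) is injective, since $\vec v$ determines its own endpoint; hence $\prec_B$, being the lexicographic order on these keys, is a total order. Connected components induce $\prec_B$-intervals, because any two vertices in different components are compared solely by their distinct roots. Within a component with root $r$, the length of $\vec v$ equals $d(r,v)+1$ (the $<^\star$-least path is a geodesic, since $<^\star$ orders by length first), so $\prec_B$ refines distance from $r$: if $v \prec_B w$ in a common component then $d(r,v) \le d(r,w)$, with equality forcing $\vec v <_{\mathrm{lex}} \vec w$. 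Write $d(v)$ for $d(r,v)$ below.

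The witness will always be $p(v)$, the penultimate vertex of $\vec v$; note $p(v)Ev$ and $d(p(v)) = d(v)-1$, so $p(v) \prec_B v$. Two sub-claims about $p$ carry the argument. \emph{Prefix stability}: the prefix of $\vec v$ ending at $p(v)$ is itself the $<^\star$-least geodesic to $p(v)$, i.e.\ it equals $\vec{p(v)}$; otherwise a lexicographically smaller geodesic to $p(v)$, extended by $v$, would beat $\vec v$, contradicting its minimality. \emph{Least predecessor}: consequently $p(w)$ is the $\prec_B$-least vertex among all neighbors of $w$ at distance $d(w)-1$, because the least geodesic to $w$ is obtained by appending $w$ to the lexicographically least geodesic among those ending at such a neighbor.

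Now I fix $u \prec_B v \prec_B w$ with $uEw$. Since $uEw$, the vertices $u,w$ share a component, and as components are intervals, so does $v$; call the common root $r$. The endpoints of an edge differ in distance by at most $1$, and $u \prec_B v \prec_B w$ gives $d(u) \le d(v) \le d(w)$, leaving exactly two cases (and $v \neq r$, so $d(v)\ge 1$ and $p(v)$ exists). If $d(v) = d(u)$, then $p(v)$ has distance $d(v)-1 < d(u)$, so $p(v) \prec_B u$ and $x = p(v)$ works. The crux is the remaining case $d(v) = d(w) = d(u)+1$. Here $u$ is a neighbor of $w$ at distance $d(w)-1$, so by Least predecessor $p(w) \preceq_B u$. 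Also $v \prec_B w$ with $d(v)=d(w)$ gives $\vec v <_{\mathrm{lex}} \vec w$; deleting the final coordinate preserves $\le_{\mathrm{lex}}$, and by Prefix stability those truncations are exactly $\vec{p(v)}$ and $\vec{p(w)}$, whence $p(v) \preceq_B p(w)$. Chaining, $p(v) \preceq_B p(w) \preceq_B u$, so $x = p(v)$ again suffices.

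The main obstacle is precisely this last case: everything hinges on showing that the penultimate vertex of the lexicographically-least geodesic is a small enough neighbor of $v$, which is exactly what Prefix stability (the prefix of a least geodesic is a least geodesic) and the monotonicity of lexicographic order under truncation deliver. The two easier cases reduce immediately to the distance-drop $p(v)\prec_B u$, and the verification that $\prec_B$ is a total order whose components are intervals is routine bookkeeping on the keys $(v_0,\vec v)$.
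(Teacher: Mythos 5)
Your proof is correct and takes essentially the same approach as the paper: both arguments rest on identifying the penultimate vertex of the $<^\star$-least geodesic as the least prior neighbor (your prefix-stability and least-predecessor claims) and on the fact that truncating equal-length sequences preserves lexicographic order. The only difference is organizational --- the paper proves monotonicity of the least-prior-neighbor map $v \mapsto v_\dagger$ in full generality and leaves the verification of the Corneil--Krueger condition implicit, whereas you verify that first-order condition explicitly via a case analysis on distances.
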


\noindent
(Proof deferred to appendix.)

\begin{theorem}\label{lem:nlogspace-computable-BFT}
There is a logarithmic space nondeterministic Turing machine which, on input a finite ordered graph $(G,<)$ and vertices $v,w \in G$, decides whether or not $v \prec_B w$.
\end{theorem}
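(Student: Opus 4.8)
The plan is to show that the relation $v \prec_B w$, as defined via the canonical BFT, can be decided in nondeterministic logarithmic space. The definition splits into two cases depending on whether $v$ and $w$ share a connected component, so I would handle these separately. First I would note that undirected $st$-connectivity is decidable in deterministic logspace by Reingold's theorem \cite{Rei08}, and that, as remarked just before Theorem \ref{thm:logspace computable traversal}, given a vertex one can compute in logspace the $<$-least element $v_0$ of its connected component (iterate through vertices in order, testing connectivity). Thus the case $v_0 \neq w_0$ is straightforward: compute $v_0$ and $w_0$ deterministically in logspace, test equality, and if they differ, accept iff $v_0 < w_0$. The whole difficulty concentrates in the case $v_0 = w_0$, where we must decide whether $\vec{v} <^\star \vec{w}$, i.e. compare the $<^\star$-least (shortest, then lexicographically least) paths from the common source to $v$ and to $w$.

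The key observation driving the hard case is that the shortest-path distance $d(v_0,v)$ is exactly the length of $\vec{v}$, and since $<^\star$ orders first by length, the first thing to compare is these two distances. In an $n$-vertex graph each distance is at most $n$, so storing and comparing distances costs only logarithmic space; the question is how to compute a distance nondeterministically in logspace. I would use the standard nondeterministic reachability-with-counter technique: to certify that $d(v_0,v) = \ell$, nondeterministically guess a path of length $\ell$ from $v_0$ to $v$ one vertex at a time, keeping only the current vertex and a counter (both logarithmically many bits), and separately certify that no shorter path exists. The latter is the subtle point, since it is a universal/co-reachability statement; however, nondeterministic logspace is closed under complement by Immerman--Szelepcsényi, so $d(v_0,v) < \ell$ is itself an NL predicate and hence so is its negation. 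Therefore ``$d(v_0,v) = \ell$'' is decidable in NL, and comparing $d(v_0,v)$ with $d(v_0,w)$ to break the first-by-length comparison is within NL.

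When the two distances are equal, say both equal $\ell$, we must compare the two shortest paths lexicographically. Here I would guess the two paths $\vec{v}$ and $\vec{w}$ vertex by vertex \emph{in lockstep}, maintaining at each step only the two current vertices and the step counter. At each position I must guess not merely \emph{a} length-$\ell$ path but the lexicographically least one; I enforce this prefix by prefix, using at each step the NL-checkable predicate ``from the current vertex there is a continuation of the right remaining length reaching the target that begins with vertex $u$,'' choosing the least such $u$, again invoking closure under complement to verify minimality of each chosen vertex. Once the two least paths are being generated in lockstep, I accept as soon as their first point of disagreement has $\vec{v}$ smaller, and reject if $\vec{w}$ is smaller or if they coincide. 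The main obstacle is precisely this need to generate the canonical (lexicographically least) shortest path rather than an arbitrary shortest path while using only logarithmic space: a naive approach would store the whole path. The resolution is to recompute on the fly, carrying only the current vertex and counter and re-verifying the ``least continuation'' condition at each step as an NL query, leaning throughout on Immerman--Szelepcs\'enyi to turn the minimality and optimality conditions into NL-decidable predicates. Since NL is closed under the finitely many compositions and complementations used above, the entire comparison of $v \prec_B w$ is decidable in nondeterministic logarithmic space.
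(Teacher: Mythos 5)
Your proposal is correct and follows essentially the same route as the paper: split on whether $v$ and $w$ share a connected component, then decide $\vec{v} <^\star \vec{w}$ by first comparing lengths and then regenerating the two lexicographically least shortest paths in lockstep, certifying each greedy choice of next vertex by NL reachability queries while storing only the current vertices and a counter. The only substantive difference is presentational: the paper phrases the greedy step as directed reachability in the graph whose edges are oriented by increasing distance from $v_0$ (and compares lengths by seeing which lockstep generation terminates first), and it leaves implicit the closure of NL under complement (Immerman--Szelepcs\'enyi) that you correctly make explicit for certifying exact distances and the minimality of each chosen vertex.
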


\begin{proof}
As in the proof of Theorem \ref{thm:logspace computable traversal}, given two vertices $v$ and $w$, first test whether they are in the same connected component. 
If not, decide $v \prec_B w$ according to whether $v_0 < w_0$.

Otherwise we argue that we can construct the sequence $\vec{v} = (v_0,\dots,v_{\ell-1},v)$ in the following sense: given an index for $v_i$, we can test whether it's equal to $v$; if not, we can construct the index of $v_{i+1}$, all in logarithmic space.

If we can do this, then we decide $v \prec_B w$ by comparing $\vec{v} <^\star \vec{w}$. First we compare their lengths: we simultaneously construct $(v_{i+1},w_{i+1})$ from $(v_i,w_i)$, until the first index is $v$ or the second is $w$. Unless this happens at the same stage, we are done. (Since $(v_{i+1},w_{i+1})$ overwrites $(v_i,w_i)$, this remains in logarithmic space.)

Otherwise, we start over, and simultaneously construct $(v_i,w_i)$ until we (necessarily) find the first index at which they differ. Then we decide $v \prec_B w$ according to which is larger.

It remains to show how to construct $v_{i+1}$ from $v_i$. Orient all edges in $G$ so that they increase distance from $v_0$. Then $v_{i+1}$ is the $<$-least vertex $x$ such that there is an edge $(v_i,x)$ and a directed path $(x,v)$. Since we can compute directed reachability in nondeterminstic logarithmic space, we can find $v_{i+1}$ in nondeterminstic logarithmic space as well.
\end{proof}

At this point we are ready to state the two central results of this paper.
\begin{theorem}\label{main-1}
The following are equivalent:
\begin{enumerate}
    \item $Q$ is a logspace-decidable query over finite $K$-structures.
    \item There is a quantifier-free interpretation $\pi:\Gamma_2 \to (K,S)$ such that for all sufficiently large finite $K$-structures $A$, $$ A \in Q \iff A \models (\mathfrak{S}S)\, ((\mathfrak{T} <)\, \varphi)^\pi, $$ where $\varphi$ is the formula expressing undirected $st$-connectivity.
    \item There is a traversal-invariant definable query $R$ over finite $(K,S)$ structures such that for any finite $K$-structure $A$, $$ A \in Q \iff A \models (\mathfrak{S}S)\, R $$
\end{enumerate}
\end{theorem}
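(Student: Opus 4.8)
The plan is to prove the cycle of implications $1 \Rightarrow 2 \Rightarrow 3 \Rightarrow 1$. The first implication requires no new work: it is exactly the content of Theorem \ref{thm:logspace is TI}, whose interpretation is quantifier-free and whose sentence $(\mathfrak{T}<)\,\varphi$ asserts undirected $st$-connectivity, matching statement 2 verbatim (taking $L = K$).

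For $2 \Rightarrow 3$, I would simply unwind the definitions. Given the quantifier-free interpretation $\pi : \Gamma_2 \to (K,S)$ from statement 2 and the $st$-connectivity sentence $(\mathfrak{T}<)\,\varphi$, the object $((\mathfrak{T}<)\,\varphi)^\pi$ is, by definition, a basic traversal-invariant definable query over finite $(K,S)$-structures; call it $R$. (Here $\mathcal{G}'$ is taken to be the image of $\pi$, and $st$-connectivity is $(\mathcal{G}',\mathcal{T})$-invariant by the earlier lemma of examples.) Since statement 2 already asserts that $A \models (\mathfrak{S}S)\,R$ defines $Q$ independently of the successor expansion, $R$ is successor-invariant, so $(\mathfrak{S}S)\,R$ is well-defined and statement 3 holds.

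The substantive direction is $3 \Rightarrow 1$; here I would exhibit a logspace decision procedure. On input the canonical encoding of some successor expansion $(A,S)$ of $A$, it suffices to decide $(A,S) \models R$, since $A \in Q \iff A \models (\mathfrak{S}S)\,R$ and, by successor-invariance of $R$, the truth value of $(A,S) \models R$ does not depend on the chosen $S$. As $R$ is a fixed boolean combination of basic traversal-invariant definable queries, and logspace is closed under fixed boolean combinations, it is enough to decide a single basic query $((\mathfrak{T}<)\,\chi)^\pi$, where $\pi : \Gamma_n \to (K,S)$ is an interpretation and $\chi$ is $(\mathcal{G}',\mathcal{T})$-invariant. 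I would proceed in three stages: (i) form the $n$-pointed graph $G = (A,S)^\pi$, whose vertices are the $k$-tuples of $A$ satisfying $\partial^\pi$ and whose edges are given by $E^\pi$, noting that the order on $A$ induced by $S$ lifts lexicographically to an order $<$ on tuples and that vertex membership, adjacency, the order, and the constants are all decided by evaluating fixed first-order formulas on $(A,S)$, hence in logspace; (ii) using Theorem \ref{thm:logspace computable traversal}, compute from $(G,<)$ a genuine traversal $(G,\prec)$ in logspace; and (iii) evaluate the fixed first-order sentence $\chi$ on $(G,\prec)$, again in logspace. By the definition of $(\mathfrak{T}<)$, the correctness of the computed traversal, and the $(\mathcal{G}',\mathcal{T})$-invariance of $\chi$, the outcome of (iii) decides the basic query.

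The main obstacle is ensuring that stages (i)--(iii) \emph{compose} within logarithmic space. The difficulty is that $G$ is not an explicit input but a logspace transduction of $(A,S)$, and when $k > 1$ its vertex set is polynomially larger than $A$, so $G$ cannot be written to the work tape. I would handle this with the standard logspace composition lemma: the traversal machine of Theorem \ref{thm:logspace computable traversal} and the evaluation of $\chi$ access $G$ only through its vertex predicate, edge relation, order, and constants, and each such query is answered by re-running the logspace transduction of stage (i) on demand rather than by storing $G$. Since the composition of a logspace machine with a logspace-computable input map is again logspace---recomputing intermediate bits instead of caching them keeps the work space logarithmic---the whole pipeline runs in logarithmic space. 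This completes $3 \Rightarrow 1$ and closes the cycle.
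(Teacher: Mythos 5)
Your cycle $1 \Rightarrow 2 \Rightarrow 3 \Rightarrow 1$ follows the paper's own decomposition: $1 \Rightarrow 2$ is cited identically, and your $3 \Rightarrow 1$ is the paper's argument with more detail spelled out --- the paper compresses your stages (i)--(iii) and the recompute-on-demand composition argument into the assertions that logspace queries are closed under boolean combinations, under elementary interpretations, and under first-order combinations, before invoking Theorem \ref{thm:logspace computable traversal} exactly as you do. That direction is sound.

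There is, however, a genuine gap in your $2 \Rightarrow 3$. Statement 2 only guarantees $A \in Q \iff A \models (\mathfrak{S}S)\,((\mathfrak{T}<)\,\varphi)^\pi$ for all \emph{sufficiently large} finite $A$, whereas statement 3 demands a query $R$ satisfying the equivalence for \emph{every} finite $A$. Taking $R = ((\mathfrak{T}<)\,\varphi)^\pi$ verbatim, as you do, can therefore fail on the finitely many isomorphism classes of small structures excluded by statement 2 (these arise because the configuration-graph interpretation of Theorem \ref{config graph-interpretation} is only faithful on sufficiently long input strings). The paper closes this hole with a clause you omitted: traversal-invariant definable queries are closed under finite differences, so $R$ can be patched on the exceptional small structures (each isomorphism class of a fixed finite structure is first-order definable, and such first-order conditions can be folded into basic traversal-invariant queries via suitable interpretations, then combined boolean-wise with your $R$). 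With that one additional observation your proof goes through; without it, the implication as written is false at small structures.
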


\begin{proof}
Implication $1 \Rightarrow 2$ is exactly Theorem \ref{thm:logspace is TI}. Implication $2 \Rightarrow 3$ is immediate, as $((\mathfrak{T}<)\,\varphi)^\pi$ is by definition a traversal-invariant definable query, and traversal-invariant queries are closed under finite differences. It remains to show $3 \Rightarrow 1$.

It suffices to show that the traversal-invariant definable query $R$ is logspace computable over finite $(K,S)$ structures, as given an encoding of a $K$-structure $A$, a logspace Turing machine can always compute a successor relation on the domain of $A$, by using the particular encoding in which $A$ is presented.

Since logspace-computable queries are closed under boolean combinations, it suffices to show that any basic traversal-invariant definable query is logspace computable. Since logspace computable queries are closed under elementary interpretations, it suffices to show that for any class $\mathcal{G}'$ of finite graphs, every $(\mathcal{G}',\mathcal{T}')$-invariant formula is logspace computable over graphs in $\mathcal{G'}$, where $\mathcal{T'}$ is the family of all expansions of graphs in $\mathcal{G'}$ by traversals. 

But for this, it suffices to show that any first-order sentence in the language of ordered graphs is logspace 
computable given an encoding of a finite graph, where the order is the traversal defined in Theorem \ref{thm:logspace computable traversal}. Since logspace queries are closed under first-order combinations, it suffices to check that given any encoding of a graph and two vertices therein, we can test whether they are equal, test whether they are connected by an edge, or compare them according to the canonical traversal.

The first two are true, and the last is exactly Theorem \ref{thm:logspace computable traversal}.
\end{proof}
By replacing `L' by `NL' and `traversal' by `breadth-first traversal' throughout, we get

\begin{theorem}\label{main-2}
The following are equivalent:
\begin{enumerate}
    \item $Q$ is an nlogspace-decidable query over finite $K$-structures.
    \item There is a quantifier-free interpretation $\pi:\Gamma_6 \to (K,S)$ such that for all sufficiently large finite $K$-structures $A$, $$ A \in Q \iff A \models (\mathfrak{S}S)\, ((\mathfrak{B} <)\, \psi)^\gamma, $$ where $\psi$ is the sentence of Theorem \ref{formula-psi}.
    \item There is a breadth-first traversal-invariant definable query $R$ over finite $(K,S)$ structures such that for any finite $K$-structure $A$, $$ A \in Q \iff A \models (\mathfrak{S}S)\, R $$
\end{enumerate}
\end{theorem}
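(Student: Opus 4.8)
The plan is to follow the proof of Theorem \ref{main-1} essentially verbatim, making the substitutions promised in the text: replace \emph{logspace} by \emph{nlogspace}, \emph{traversal} by \emph{breadth-first traversal}, and invoke the breadth-first analogue of each auxiliary result at the corresponding step. For the implication $1 \Rightarrow 2$ I would simply cite the NL analogue of Theorem \ref{thm:logspace is TI} (the unnumbered theorem immediately following it), which produces exactly the quantifier-free interpretation $\gamma : \Gamma_6 \to (K,S)$ and the sentence $\psi$ appearing in clause 2. The implication $2 \Rightarrow 3$ is again immediate: $((\mathfrak{B}<)\,\psi)^\gamma$ is by definition a basic BFT-invariant definable query, and BFT-invariant definable queries are closed under Boolean combinations, hence in particular under finite differences.

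The substantive direction is $3 \Rightarrow 1$, which I would structure by the same chain of reductions as in the proof of Theorem \ref{main-1}. Since NL is closed under Boolean combinations and under elementary interpretations (an interpretation applied to the encoding is a first-order, hence logspace, operation, and NL is closed under logspace reductions), it suffices to show that for any family $\mathcal{G}'$ of finite graphs, every $(\mathcal{G}',\mathcal{B}')$-invariant first-order sentence is NL-computable over $\mathcal{G}'$, where $\mathcal{B}'$ is the family of expansions by breadth-first traversals. For this I would evaluate the sentence over the ordered graph $(G,\prec_B)$, where $\prec_B$ is the canonical breadth-first traversal of Lemma \ref{lem:canonical-BFT}; by $(\mathcal{G}',\mathcal{B}')$-invariance the truth value does not depend on which breadth-first traversal is chosen, and $(G,\prec_B)$ is a legitimate one. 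Evaluating the sentence then reduces to deciding its three atomic predicates — equality, the edge relation $E$, and the order $\prec_B$ — and closing under first-order combinations. Equality and $E$ are trivially decidable from the encoding, and $\prec_B$ is decidable in NL by Theorem \ref{lem:nlogspace-computable-BFT}.

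The point at which the NL case genuinely diverges from the deterministic one, and the main obstacle, is the closure of NL under first-order combinations. In the logspace proof this was routine, since deterministic logspace is trivially closed under complement; here, evaluating a first-order sentence with nested negations over a predicate ($\prec_B$) that is only \emph{nondeterministically} logspace-decidable requires $\mathrm{NL} = \mathrm{coNL}$, i.e., the Immerman--Szelepcs\'enyi theorem. Concretely, $\prec_B$ and its complement are both NL-decidable, so each atomic formula together with its negation serves as an NL building block, while the existential and universal quantifiers over the polynomially many vertices are absorbed into nondeterministic guessing and, via coNL, co-nondeterministic verification. Granting this closure, the three reductions compose to place $R$, and hence $Q$, in NL, completing $3 \Rightarrow 1$ and with it the cycle of implications.
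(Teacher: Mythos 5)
Your proposal is correct and follows essentially the same route as the paper, whose entire proof of Theorem \ref{main-2} is the remark that it follows from the proof of Theorem \ref{main-1} by replacing `L' with `NL' and `traversal' with `breadth-first traversal' throughout---so that the unnumbered NL analogue of Theorem \ref{thm:logspace is TI} handles $1 \Rightarrow 2$ and Theorem \ref{lem:nlogspace-computable-BFT} plays the role of Theorem \ref{thm:logspace computable traversal} in $3 \Rightarrow 1$. Your one addition---making explicit that closure of NL under first-order combinations (in particular under negation) rests on the Immerman--Szelepcs\'enyi theorem $\mathrm{NL} = \mathrm{coNL}$---is a genuine detail that the paper's substitution argument leaves implicit, and you identify it correctly.
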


\paragraph{Structures with successor}
Suppose the signature $K$ contains the unary function symbol $S$, and that $\mathcal{K}$ is a family of finite $K$-structures in which $S$ is interpreted by a successor function. Then the quantifier $(\mathfrak{S} S)$ in any successor-invariant query $(\mathfrak{S} S)\,R$ is superfluous over $\mathcal{K}$; i.e., for any $A \in \mathcal{K}$,
$$ A \models R \iff A \models (\mathfrak{S} S)\, R. $$
The reason is that the interpretation of $S$ in $R$ is independent of the particular successor function on $A$ that we choose, so we might as well choose the one native to $A$.

In particular, for such families $\mathcal{K}$, we can drop the $(\mathfrak{S} S)$ quantifier from the traversal- or breadth-first traversal-invariant queries $R$ of Theorems \ref{main-1} and \ref{main-2}. In particular, let us take the case of strings over a finite alphabet $\Sigma$, which are the original setting for logspace and nlogspace queries, and also successor structures as described in Section \ref{sec:descriptive complexity}. Then we have

\begin{corollary}
 For any family $Q \subseteq \Sigma^\star$,
 \begin{enumerate}
     \item $Q$ is logspace-decidable iff there is a traversal-invariant definable query $R$ such that for every string $x \in \Sigma^\star$, $x \in Q \iff x \models R$, and
     \item $Q$ is nlogspace-decidable iff there is a breadth-first traversal-invariant definable query $R$ such that for every string $x \in \Sigma^\star$, $x \in Q \iff x \models R$.
 \end{enumerate}
\end{corollary}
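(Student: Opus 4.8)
The plan is to derive this corollary directly from Theorems \ref{main-1} and \ref{main-2}, using the observation recorded in the ``Structures with successor'' paragraph that the quantifier $(\mathfrak{S}S)$ is superfluous over any family of finite structures whose signature already carries a native successor function. The entire content of the corollary is that strings are exactly such structures: recall from Section \ref{sec:descriptive complexity} that a string $x \in \Sigma^\star$ is represented as a finite $\Sigma$-structure whose signature already includes constants $\min$ and $\max$ and a successor function, interpreted by the genuine successor on the index set $\{0,\dots,n-1\}$. Thus, taking $K = \Sigma$, the family of all finite $\Sigma$-structures (which is the same thing as $\Sigma^\star$) falls under the hypotheses of that paragraph.

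First I would instantiate Theorem \ref{main-1} (and, for part 2, Theorem \ref{main-2}) at $K = \Sigma$. This yields that $Q$ is logspace-decidable iff there is a traversal-invariant definable query $R$ over finite $(\Sigma,S)$-structures with $x \in Q \iff x \models (\mathfrak{S}S)\,R$ for every $x$. Next I would invoke superfluity: because every string already interprets a successor function natively, invariance of $R$ under the choice of successor expansion lets us always select that native one, so $x \models (\mathfrak{S}S)\,R \iff x \models R$, where the successor symbol introduced by the expansion is read as the native successor of $x$. This gives the forward (``only if'') direction of both parts, the witness for the corollary being the query over $\Sigma$-structures obtained by pulling $R$ back along the diagonal expansion in which the added successor equals the native one; this pullback is again traversal-invariant (resp.\ BFT-invariant) definable, since such definability is closed under composition with the (quantifier-free) expansion interpretation.

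For the converse (``if'') direction I would run this reasoning backwards. Given a traversal-invariant (resp.\ BFT-invariant) definable query $R$ with $x \in Q \iff x \models R$, I would view $R$ as a query over $(\Sigma,S)$-structures that simply ignores the expanded successor and evaluates on the $\Sigma$-reduct; this is traversal-invariant definable over $(\Sigma,S)$-structures, since composing the defining interpretation with the reduct interpretation $(\Sigma,S) \to \Sigma$ again yields an interpretation. For this query one has $x \models (\mathfrak{S}S)\,R \iff x \models R$, so the implication $3 \Rightarrow 1$ of Theorem \ref{main-1} (resp.\ Theorem \ref{main-2}) places $Q$ in L (resp.\ NL).

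The only real work is the signature bookkeeping: reconciling the successor native to the string representation with the one introduced abstractly by the successor expansion in the statements of the main theorems. I expect this to be the main and essentially sole obstacle, and it is dissolved entirely by the superfluity observation together with the fact that moving a traversal-invariant definable query between the signatures $\Sigma$ and $(\Sigma,S)$ along the evident quantifier-free reduct and diagonal-expansion interpretations preserves traversal-invariant definability. There is no new complexity-theoretic or model-theoretic content beyond Theorems \ref{main-1} and \ref{main-2}.
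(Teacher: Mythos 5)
Your proposal is correct and takes essentially the same route as the paper: the corollary is obtained by instantiating Theorems \ref{main-1} and \ref{main-2} at $K = \Sigma$ and invoking the ``Structures with successor'' observation that $(\mathfrak{S}S)$ is superfluous because the string signature already carries a native successor function. Your extra bookkeeping---transporting $R$ between the signatures $\Sigma$ and $(\Sigma,S)$ via the diagonal-expansion and reduct interpretations---just makes explicit what the paper compresses into ``dropping the $(\mathfrak{S}S)$ quantifier,'' and is sound since traversal- and BFT-invariant definability are closed under composition of interpretations.
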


\subsection{Discussion and open questions} Our results are the first presentation-invariant characterizations of L and NL, and, to our knowledge, the largest known complexity classes characterized by first-order logic extended by invariant definability of an elementary class of presentations. They demonstrate the surprising power of interpretations (even quantifier-free ones!) and establish a new foundational correspondence between graph traversals and complexity classes.

The elephant in the room is whether \emph{depth-first traversal invariance} captures a meaningful complexity class, like polynomial time. While we have been able to find depth-first invariant definitions of certain suggestive queries (like \emph{vertex-avoiding paths}), we still do not have very strong evidence one way or the other. More generally, there are a variety of graph traversals and a variety of associated presentations (such as the ancestral relation of the traversal tree) which might correspond to interesting complexity classes.

Finally, we have extended these notions of definability to arbitrary infinite structures by requiring that the underlying order be well-founded. (Since well-orders are not elementarily definable, this circumvents the usual ``Beth definability'' obstacle to studying presentation invariance over infinite structures.) Whereas separating traversal-invariant from BFT-invariant definability over classes of finite structures requires separating L and NL, it is plausibly easier to separate them over arbitrary classes, and it is plausible that this will inform the finite case. This work is ongoing.


\clearpage

\appendix

\section{Interpretations and change of signature}

We review the basic definitions behind interpretations, following the exposition of Hodges \cite{Hod93}, except that we also allow for functional signatures (see below). There is no new mathematical content here; however, getting the definitions and terminology straight is terribly important, since we use interpretations extensively. 

\begin{definition}
 Let $L$ and $K$ be signatures and $k \in \mathbb{N}$. An \emph{elementary $k$-ary interpretation} $\pi : L \to K$ is a first-order $K$-formula $\partial^\pi(\bar{x})$, for each constant symbol $c \in L$ a variable-free $K$-term $c^\pi$, and for each relation symbol $r \in L$, a first-order $K$-formula $r^\pi(\bar{x}_1,\dots,\bar{x}_n)$, where $n$ is the arity of $r$, and the length of each tuple throughout is $k$.
 
 An interpretation is \emph{quantifier-free} in case $\partial^\pi$ and each $r^\pi$ is quantifier-free.
\end{definition}

\begin{definition}
 Given an elementary $k$-ary interpretation $\pi : L \to K$ and a first-order $L$-term or $L$-formula $\alpha$, its \emph{translation} $\alpha^\pi$, is a $K$-formula given by the following recursion:
 \begin{enumerate}
    \item If $\alpha$ is a variable $x$, then $\alpha^\pi$ is a $k$-tuple of (distinct) variables $\bar{x}$.
    \item If $\alpha$ is a constant symbol $c$, then $\alpha^\pi$ is $c^\pi$. 
     \item If $\alpha$ is the atomic formula $r(\alpha_1,\dots,\alpha_n)$, then $\alpha^\pi$ is $r^\pi(\alpha^\pi,\dots,\alpha^\pi)$,
     \item If $\alpha$ is a boolean combination of formulas $\vartheta$, then $\alpha^\pi$ is the same boolean combination of formulas $\vartheta^\pi$,
     \item If $\alpha$ is $\exists x\, \vartheta$, then $\alpha^\pi$ is the formula $\exists \bar{x}\, \partial^\pi(\bar{x}) \wedge \vartheta^\pi$, and
     \item If $\alpha$ is $\forall x\, \vartheta$, then $\alpha^\pi$ is the formula $\forall \bar{x}\, \partial^\pi(\bar{x}) \to \vartheta^\pi$.
 \end{enumerate}
 
\end{definition}

In the definition below, $\partial^\pi[A^k]$ denotes the subset of $A^k$ on which $\partial^\pi$ holds.

\begin{definition}
  Suppose $\pi : L \to K$ is an interpretation and $A$ is a $K$-structure. Then the \emph{$\pi$-translation} $A^\pi$ is the $L$-structure with domain $\partial^\pi[A^k]$ with the denotation of $\lambda$ given by $\lambda^\pi$, for each $\lambda \in L$. 
  
  (Note that even though the arity of $\lambda^\pi$ is $nk$ as a $K$-formula, it defines an $n$-ary relation over $A^\pi$, whose elements are $k$-tuples of $A$.)
\end{definition}

\paragraph{Functional signatures} In a very particular case (see \emph{successor expansions}, Definition \ref{successor expansions}) we will want to consider signatures with function symbols, and exactly once (Theorem \ref{config graph-interpretation}), we will want to define an interpretation $\pi : L \to K$ where $L$ has some function symbol $f(x_1,\dots,x_n)$. In this case $f^\pi(\vec{x}_1,\dots,\vec{x}_n)$ is a \emph{definition by cases}, where each case is a first-order $K$-formula, and the \emph{definiens} inside each case is a $k$-tuple of $K$-terms in the free variables $(\vec{x}_1,\dots,\vec{x}_n)$, where $k$ is the arity of $\pi$. In a quantifier-free interpretation, each case must be a quantifier-free $K$-formula.

Functional signatures also generalize signatures with constants, which are nullary function symbols. For a constant symbol $c$, $c^\pi$ is a definition by cases, where each case is a $k$-tuple of variable-free $K$-terms.

It is common practice in finite model theory to replace functions by their graph relations, thus working with purely relational signatures. The only reason for considering functional signatures here is to make certain interpretations quantifier free (cf. Theorems \ref{main-1} and \ref{main-2}); in the purely relational setting, these interpretations would contain quantifiers.

\paragraph{Injective interpretations} Usually an interpretation will also contain a first-order $K$-formula $eq^\pi(\bar{x},\bar{y})$ defining when we regard two $k$-tuples as equal. (For example, when interpreting rational numbers by pairs of integers, we say $(a,b) = (c,d) \iff ac-bd =0$.) In case $eq^\pi$ is simply equality of tuples (as above), $\pi$ is called an \emph{injective interpretation}. Here we do not deal with any interpretations with a nontrivial equivalence relation. Therefore, it is convenient to drop the word ``injective'' and simply refer to interpretations.

\begin{lemma}[Fundamental property of interpretations]
Suppose that $\pi : L \to K$ is an elementary interpretation. Then for every $K$-structure $A$, every $n$-ary $L$-sentence $\varphi$, and every $\bar{x}_1,\dots,\bar{x}_n$ in the domain $\partial^\pi[A^k]$ of $A^\pi$,
$$ A \models \varphi^\pi (\bar{x}_1,\dots,\bar{x}_n) \iff A^\pi \models \varphi (\bar{x}_1,\dots,\bar{x}_n).$$
\end{lemma}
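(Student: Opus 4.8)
The plan is to prove this by structural induction on the $L$-formula $\varphi$, after first establishing the analogous statement for $L$-terms. The entire content is bookkeeping: matching the single $L$-variable that ranges over the domain $\partial^\pi[A^k]$ of $A^\pi$ with a block of $k$ fresh $K$-variables $\bar{x}$ that range over $A^k$ subject to the guard $\partial^\pi$.

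First I would prove a term lemma: for every $L$-term $t$ whose variables lie among $x_1,\dots,x_n$ and every $\bar{a}_1,\dots,\bar{a}_n \in \partial^\pi[A^k]$, the $k$-tuple obtained by evaluating $t^\pi$ at $(\bar{a}_1,\dots,\bar{a}_n)$ in $A$ equals the value $t^{A^\pi}(\bar{a}_1,\dots,\bar{a}_n)$ computed in $A^\pi$. This is an induction on term structure: for a variable $x_i$ the translation is the block $\bar{x}_i$ and both sides equal $\bar{a}_i$; for a constant symbol $c$ both sides are $(c^\pi)^A$ by the very definition of $A^\pi$; and for a function symbol $f$ (the single functional case the paper allows) the definition-by-cases $f^\pi$ is by construction exactly how $f^{A^\pi}$ is defined, so the inductive hypothesis on the arguments closes the case.

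With the term lemma in hand I would run the main induction. For an atomic formula $r(t_1,\dots,t_m)$, the relation $r^{A^\pi}$ is \emph{defined} as the set of argument tuples on which $A \models r^\pi$, so combining that definition with the term lemma yields the equivalence; the equality atom $t_1 = t_2$ is handled the same way, since injectivity of $\pi$ makes equality in $A^\pi$ genuine equality of $k$-tuples in $A$. Boolean connectives are immediate, because translation $(\cdot)^\pi$ commutes with them and satisfaction respects them, so the inductive hypotheses on the immediate subformulas suffice.

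The crux, and the step I expect to require the most care, is the quantifier case. For $\varphi = \exists x\,\vartheta$ the translation is $\exists \bar{x}\,(\partial^\pi(\bar{x}) \wedge \vartheta^\pi)$, and I must verify that an $L$-witness for $x$ in $A^\pi$, which by definition is an element of the domain $\partial^\pi[A^k]$, corresponds precisely to a $K$-witness $\bar{a} \in A^k$ for the block $\bar{x}$ satisfying $\partial^\pi(\bar{a})$; granting this correspondence, the inductive hypothesis applied to $\vartheta$ with one additional parameter tuple finishes the case. The universal quantifier is dual, the relativization appearing as the implication $\partial^\pi(\bar{x}) \to \vartheta^\pi$. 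The only subtlety is to keep the variable-block bookkeeping straight, so that each single $L$-variable expands uniformly to $k$ fresh $K$-variables, and to confirm that the guard $\partial^\pi$ restricts quantification in $A$ to exactly the domain of $A^\pi$ and to nothing more.
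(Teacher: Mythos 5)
Your proposal is correct, and it is exactly the standard structural-induction argument (term lemma first, then induction on formulas with the guarded-quantifier correspondence as the key step). Note that the paper itself gives no proof of this lemma at all: it is stated as a routine fact imported from Hodges' treatment of interpretations, with the remark that the appendix contains ``no new mathematical content.'' Your write-up supplies precisely the proof the paper implicitly relies on, including the two points that genuinely need attention --- the use of injectivity of $\pi$ in the equality-atom case, and the fact that the guard $\partial^\pi$ makes quantification in $A$ range over exactly the domain $\partial^\pi[A^k]$ of $A^\pi$ --- so there is nothing to correct.
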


Note that on the left-hand side, $(\bar{x}_1,\dots,\bar{x}_n)$ is regarded as an $nk$-tuple of elements in $A$, and on the right-hand side, it is regarded as an $n$-tuple of elements in $A^\pi$.

\begin{definition}
 Suppose that $\pi : L \to K$, $\mathcal{L}$ is a class of $L$-structures, and $\mathcal{K}$ is a class of $K$-structures. Then $\pi$ is a \emph{left total interpretation $\mathcal{K \to L}$} in case for every $A \in \mathcal{K}$, $A^\pi \in \mathcal{L}$.
\end{definition}

\subsection*{Properties of interpretations $\rho$ and $\tau$}

\paragraph*{Lemma \ref{lem:rho-property}}
For any graph $D \in \mathcal{D}^S_2$, there is a directed path from $s$ to $t$ in $D$ iff the vertices $y$ and $z$ are equidistant from $x$ in $D^\rho$. Even stronger, if there is no directed path from $s$ to $t$ in $D$, then either $d(x,y)$ or $d(x,z)$ is undefined or
$|d(x,y) - d(x,z) | \ge 2,$
where $d$ indicates distance in $D^\rho$.

\begin{proof}[Proof of Lemma \ref{lem:rho-property}]
(Adapted from \cite{Tom91}) Fix a graph $D$ and let $n$ be the number of vertices in $D$. Identify the vertices of $D$ with $\{0,1,\dots,n-1\}$ such that $S(i,i+1)$. Then in $D^\rho$, there is a path 
$$x=(s,0) - (s,1) - \dots - (s,n-1)=y,$$
of length $n-1$, and this is moreover the distance between $x$ and $y$, by considering the second coordinate.

If $t$ is reachable from $s$ in $D$, then that must be witnessed by some directed path $(s=r_0\to r_1 \to \dots\to r_{\ell-1}=t)$ of length $\ell \le n$. Then
$$(r_0,0) - (r_1,1) - \dots - (r_{\ell-1},\ell-1) - (r_{\ell-1},\ell) - \dots - (r_{\ell-1},n-1)$$
is a path in $D^\rho$ from $x$ to $z$ of length exactly $n-1$. Again by considering the second coordinate, we can see that there is no shorter path. Hence $y$ and $z$ are equidistant from $x$.

Conversely, suppose that there were a path in $D^\rho$ from $x$ to $z$ in $D^\rho$ of length exactly $n-1$. Then it must be of the form
$$ (u_0,0) - (u_1,1) - \dots - (u_{n-1},n-1),$$ where $u_0 = s$, $u_{n-1} = t$, and for each $i$, either $u_i =u_{i+1}$ or  $u_i \to u_{i+1}$ in $D$. Hence the $u_i$ witness a directed path from $s$ to $t$. 

Moreover, observe the parity of the second coordinate in any path must alternate. Hence, the length of any path from $(s,0)$ to $(t,n-1)$ must be equal to $n$ modulo 2. Therefore, if there is no directed path from $s$ to $t$ in $D$, then in $D^\rho$ then any path from $x$ to $z$ in $D^\rho$ must have length at least $n+2$.
\end{proof}

\paragraph{Theorem \ref{formula-psi}}
Let $\mathcal{B}$ be the set of all expansions of graphs in $\mathcal{G}'_6$ by a breadth-first traversal. There is a $(\mathcal{G}'_6,\mathcal{B})$-invariant formula $\psi$ such that for any $(G,x,y,z) \in \mathcal{G}'_3$,
$$d(x,y) = d(x,z) \implies G^\tau \models (\mathfrak{B} <)\,\psi $$
$$|d(x,y) - d(x,z)| \ge 2 \implies G^\tau \models \neg (\mathfrak{B} <)\, \psi,$$
where the second case also contains all those graphs where $x$, $y$, and $z$ are not all connected.

\begin{proof}[Proof of Theorem \ref{formula-psi}]
Let $\psi$ assert that all six constants $(x_1,\dots,z_2)$ occur in the same connected component; moreover, if $x_1<x_2$, then $y_2$ and $z_2$ occur in the same quasi-level, and if $x_2 < x_1$, then $y_1$ and $z_1$ occur in the same quasi-level. (To show that $\psi$ is invariant, it suffices to show that $\psi$ is correct.)

Fix a graph $(G,x,y,z) \in \mathcal{G}'_3$ such that $d(x,y)=d(x,z)$. Consider its translation, and expand this by an arbitrary breadth-first traversal. We may assume that all constants $(x_1,\dots,z_2)$ lie in the same connected component; otherwise $\psi$ evaluates to false, which is correct as not all of $(x,y,z)$ are connected.

Suppose that $x_1 < x_2$. Let $w$ be the least element of $<$ in the connected component of $x_1$. Then $w$ must be in $G_1$, so any path from $w$ to $y_2$ or $z_2$ must pass through the edge $(x_1,x_2)$. Hence, 
$$ |d(x,y) - d(x,z) | = |d(x_2,y_2) - d(x_2,z_2) | = |d(w,y_2) - d(w,z_2) |. $$
In other words, the desired quantity is exactly the difference in distance between $y_2$ and $x_2$ to the source. We know that this difference is either equal to 0 or at least 2, and $\psi$ correctly distinguishes these two cases by testing whether $y_2$ and $z_2$ occur in the same quasi-level.

Similarly, if $x_2 < x_1$, $\psi$ distinguishes $|d(x,y) - d(x,z)| = 0 $ from $|d(x,y) - d(x,z)| \ge 2$ by testing whether $y_1$ and $z_1$ occur in the same quasi-level.

\end{proof}

\section{Defining configuration graphs by interpretation}

\paragraph{Theorem \ref{config graph-interpretation}}
For every NMFA $\mathcal{M}$ with alphabet $\Sigma$ there is an interpretation $\pi : \Gamma_2 \to \Sigma$ such that for every sufficiently long string $x \in \Sigma^\star$, $x^\pi$ is isomorphic to the configuration graph of $\mathcal{M}$ on input $x$.
Furthermore, we can expand $\pi$ to an interpretation $\pi : (\Gamma_2,S) \to \Sigma$, so that $x^\pi$ is a successor expansion of the above configuration graph.
Moreover, $\pi$ can be made quantifier-free.

\begin{proof}[Proof of Theorem \ref{config graph-interpretation}]
Let $k$ be the number of heads in $\mathcal{M}$. Then $k+1$ will be the dimension of $\pi$. The domain of the interpretation $\partial_\pi$ just stipulates that the first coordinate is less than $q$, the number of states of $\mathcal{M}$.

We can now establish a bijection between the domain of $\pi$ and configurations of $\mathcal{M}$ with input $x$, for any string $x$ such that $|x| \ge q$. A configuration is simply specified by current state and the location of the heads, which correspond to the first and remaining $k$ coordinates of the domain respectively. Since $x$ is sufficiently long, there are enough choices in the first coordinate for all states of $\mathcal{M}$.

Let $\vec{u}$ and $\vec{v}$ be arbitrary configurations of $\mathcal{M}$ on input an arbitrary string of length at least $q$. We want to define $E^\pi(\vec{u},\vec{v})$ to hold just in case the configuration $\vec{v}$ is reachable from $\vec{u}$ in one step. This is definable by a boolean combination of formulas of the following form:
\begin{enumerate}
    \item $u_i$ is the minimum or maximum index,
    \item $\sigma \in \Sigma$ is the character at index, and
    \item indices $u_i$ and $v_i$ are identical or adjacent.
\end{enumerate}
Each of these formulas is quantifier-free definable in the language $\Sigma$, by e.g.,
\begin{enumerate}
    \item $u_i = 0$ or $u_i = n-1$,
    \item $\sigma(u_i)$, and
    \item $u_i = v_i$ or $S(u_i) = v_i$ or $u_i = S(v_i)$
\end{enumerate}
respectively, where $0$ and $n-1$ are aliases for $\min$ and $\max$ respectively. 
Finally, 
$$ s^\pi = (0,0,\dots,0),\  t^\pi = (S(0),0,\dots,0).$$
This is because all heads are at the left in the initial or final configuration, 0 is the start state, and 1 is the halt state.

Now for any string $x$ of length at least $q$, not only is the domain of $x^\pi$ in bijection with the configurations of $\mathcal{M}$ on $x$, but relative to this bijection $\pi_E$ defines the graph of ``next,'' and $\pi_s$ and $\pi_t$ are the initial and final configurations respectively. Hence $x^\pi$ as a structure is isomorphic to the configuration graph of $\mathcal{M}$ on input $x$.

To expand $\pi$ to an interpretation from $\Gamma^S_2$, we need to define a successor function on $(k+1)$-tuples of indices, given a successor function on indices. This is easy to do by mimicking the standard ``increment-by-one'' algorithm on numbers written in some fixed radix.\footnote{This is the only point in which we have to define $S^\pi$ where $S$ is a \emph{function symbol}, here we use a definition by cases in which every case is a quantifier-free term guarded by a quantifier-free formula.}
\end{proof}

\subsection*{First-order definitions of traversals}
\paragraph{Lemma \ref{lem:canonical-BFT}}
For any finite ordered graph $(G,<)$, $(G,\prec_B)$ is a breadth-first traversal.

\begin{proof}[Proof of Lemma \ref{lem:canonical-BFT}]
Connected components of $G$ induce intervals of $(G,\prec_B)$, so it suffices to assume that $G$ is connected. Let $v_0$ be the least element of $G$ (unambiguously with respect to either order).

Suppose $v$ is a non-minimal vertex and let $\vec{v} = (v_0,v_1,\dots,v_{\ell-1},v)$. Since $<^\star$-least shortest paths are closed under prefixes, $v_i \prec_B v$ for each $v$; in particular, $|\vec{v}_{\ell-1}| = \ell -1$.

Let $u$ be the $\prec_B$-least neighbor of $v$, and $\vec{u}$. Since $u \preceq_B v_{\ell-1}$, $|\vec{u}| \le \ell-1$. Since $(\vec{u},v)$ is a path from $v_0$ to $v$ of length at most $\ell$, and since $\ell$ is the distance from $v_0$ to $v$, $|\vec{u}|=\ell-1$.

Since $u$ and $v_{\ell-1}$ are the same distance from $v_0$, we have
$$ \vec{u} \le^\star (v_0,\dots,v_{\ell-1}) \wedge (v_0,\dots,v_{\ell-1},v) \le^\star (\vec{u},v).$$
Therefore $\vec{u} = (v_0,\dots,v_{\ell-1})$. In particular, $u = v_{\ell-1}$.

Finally, suppose that $v$ and $w$ are arbitrary non-minimal vertices of $G$, and that $v \prec_B w$. Let $v_\dagger$ and $w_\dagger$ be the second-to-last elements of $\vec{v}$ and $\vec{w}$ respectively. Then $v_\dagger$ and $w_\dagger$ are the $\prec_B$-least neighbors of $v$ and $w$, so it suffices to show that $v_\dagger \preceq_B w_\dagger$.

However, $\vec{v} = (\vec{v}_\dagger,v)$ and $\vec{w} = (\vec{w}_\dagger,w)$. Since $\vec{v} <^\star \vec{w}$, $\vec{v}_\dagger \le^\star \vec{w}_\dagger$, which concludes the proof.
\end{proof}

\end{document}